\theoremstyle{plain}
\newtheorem{theorem}{Theorem}[section]
\newtheorem{proposition}[theorem]{Proposition}
\newtheorem{lemma}[theorem]{Lemma}
\theoremstyle{definition}
\newtheorem{definition}[theorem]{Definition}
\theoremstyle{remark}
\DeclareMathOperator*{\argmin}{arg\,min}
\DeclareMathOperator{\polylog}{poly\,log}
\DeclareSymbolFontAlphabet{\mathbb}{AMSb}
\newcommand{\hatc}{\widehat{C}}
\newcommand{\pp}{q}
\title{Faster Global Minimum Cut with Predictions}
\author{
Benjamin Moseley\thanks{Supported in part by Google Research Award, an Infor Research Award, a Carnegie Bosch Junior Faculty Chair, NSF grants CCF-2121744 and CCF-1845146.}\\ Carnegie Mellon University\\ \texttt{moseleyb@andrew.cmu.edu}  
\and 
Helia Niaparast\thanks{Supported in part by NSF CCF-1845146.
} \\ Carnegie Mellon University\\ \texttt{hniapara@andrew.cmu.edu}
\and
Karan Singh\\Carnegie Mellon University\\  \texttt{karansingh@cmu.edu}
}
\date{}
\begin{document}

\maketitle

\sloppy

\begin{abstract}
Global minimum cut is a fundamental combinatorial optimization problem with wide-ranging applications. Often in practice, these problems are solved repeatedly on families of similar or related instances. However, the de facto algorithmic approach is to solve each instance of the problem from scratch discarding information from prior instances.  

In this paper, we consider how predictions informed by prior instances can be used to warm-start practical minimum cut algorithms. The paper considers the widely used Karger's algorithm and its counterpart, the Karger-Stein algorithm. Given good predictions, we show these algorithms become near-linear time and have robust performance to erroneous predictions.  Both of these algorithms are randomized edge-contraction algorithms. Our natural idea is to probabilistically prioritize the contraction of edges that are unlikely to be in the minimum cut. 
\end{abstract}
\section{Introduction}
Machine learning is driving scientific breakthroughs. While this has transformed many areas, there remain domains where machine learning holds immense, yet unrealized potential. One such area is the reimagining of computer science foundations through machine learning, particularly for designing \emph{faster discrete algorithms}.

A rapidly growing body of work, collectively referred to as \textbf{algorithms with predictions}, focuses on leveraging machine-learned predictions to overcome worst-case performance barriers. In recent years, hundreds of papers have explored this model, mainly applying it to improve \emph{quality of solutions} produced by online algorithms. The primary challenge in the online setting is uncertainty. Hence, machine learning is naturally suited to this setting. \citet{lykouris2021competitive} provide a theoretical framework that characterizes the competitive ratio of algorithms based on the quality of machine-learned predictions. Subsequent research has applied this model to achieve higher-quality solutions and break worst-case lower bounds in various problems, including caching \citep{im2022parsimonious, lykouris2021competitive}, binary search \cite{DinitzILMNV24}, scheduling \citep{lindermayr2022permutation, lattanzi2020online, im2023non}, and clustering \citep{lattanzi2021robust}. For a comprehensive overview, see the survey by \citet{mitzenmacher2022algorithms}.

Interestingly, the work of \citet{kraska2018case}, which arguably initiated this line of work, had a different goal in mind and emphasized improvements in {\em running time}, a direction that remains underexplored compared to advancements in solution quality. Their empirical results highlight the significant potential of machine learning to accelerate algorithms and motivate exploration of the algorithmic possibilities of using machine learning for runtime improvements.

Traditionally, computer science solves problems from scratch, with running times analyzed using worst-case scenarios. However, in practice, many problems are repeatedly solved over time. The conventional worst-case model often \emph{discards valuable information} shared between instances. Given that problem instances frequently exhibit similarities, machine learning offers an opportunity to learn a \emph{warm-starting state} that can enhance algorithmic performance. The community has begun to investigate theoretical guarantees for algorithms that use machine-learned warm starts to achieve runtime improvements. \citet{dinitz2021faster} initiated this line of inquiry by showing how predicted dual variables could accelerate the Hungarian algorithm for bipartite matching. Building on this idea, researchers have developed runtime-improving algorithms for discrete combinatorial optimization problems, such as shortest paths \citep{chen2022faster,mccauley2025}, maximum flow \citep{davies2023predictive}, list labeling data structures \citep{mccauley2024online}, and dynamic graph algorithms \citep{mccauley2024incremental}.

Despite these advances, this area remains underdeveloped, with significant open questions regarding how machine-learned predictions can improve algorithmic runtimes. 
 
\paragraph{Global Minimum Cut.}
This paper focuses on the global minimum cut problem. Consider an undirected graph $G = (V,E,w)$ with vertex set $V$ and edge set $E$, where each edge $e$ has a nonnegative weight $w(e)$. We will use $m$ to denote the number of edges and $n$ to denote the number of vertices. For an edge subset $E'\subseteq E$, let $w(E')$ be the sum of weights of edges in $E'$, and for a subset $V'\subseteq V$ of vertices, let $\delta(V')$ be the edges between $V'$ and $V\setminus V'$.  The goal is to find a partition $(S, T)$ of the vertex set that minimizes $w(\delta(S)) = \sum_{e \in E\cap (S\times  T)} w(e)$, thus minimizing the total weight of the edges that cross the partition. Occasionally, we will refer to the minimum cut by the edges it contains, rather than by the partition that induces it.

This problem has been extensively studied in the literature \citep{gomory1961multi, hao1994faster, stoer1997simple, karger1993global, karger1996new, karger2000minimum, gabow1995matroid, kawarabayashi2018deterministic, henzinger2020local, saranurak2021simple, li2020deterministic, li2021deterministic, henzinger2024deterministic, chekuri1997experimental}, and has wide-reaching applications, e.g., network design and network reliability \citep{colbourn1987combinatorics}, information retrieval \citep{botafogo1993cluster}, and compiler design for parallel languages \citep{chatterjee1995array}. Following a sequence of breakthrough results, the fastest known algorithms for this problem run in near-linear $O(m\polylog n)$ time, even when constrained to be deterministic \citep{henzinger2024deterministic}. However, known near-linear time algorithms are primarily of theoretical interest and have not been implemented due to their complexity. In fact, popular graph libraries \citep{siek2001boost, hagberg2008exploring} resort to algorithms that are much slower in theory but easier to implement.

\paragraph{Karger's and Karger-Stein Algorithms.}  Karger's algorithm \citep{karger1993global} and its extension, the Karger-Stein algorithm \citep{karger1996new}, are two renowned randomized algorithms for finding the global minimum cut.  They are frequently used as algorithmic benchmarks \citep{chekuri1997experimental}. The practical relevance of Karger's algorithm draws from its simplicity and its highly parallelizable nature. Given an unweighted graph $G = (V,E)$, at each iteration, Karger's algorithm picks an edge $e \in E$ uniformly at random and contracts its endpoints, keeping parallel edges but removing self-loops. Once there are only two vertices left, the partition of the vertex set formed by these two ``metavertices" is returned as a candidate for the minimum cut. This algorithm can be easily extended to weighted graphs, where an edge $e$ is picked with probability $w(e)/w(E)$, and instead of adding parallel edges, the edge weights are summed upon contraction. 

It can be shown that the cut reported by Karger's algorithm is actually a minimum cut of the graph with probability at least $\Omega({1}/{n^2})$. Therefore, by repeating Karger's algorithm $O(n^2)$ times and retaining the best cut across all runs, the algorithm recovers the true minimum cut with constant probability. Each run of Karger's algorithm can be performed in $O(m)$ time; thus the total runtime is $O(mn^2)$. However, given a sufficient number of parallel processors, each of these runs can be performed in parallel with zero intermittent communication.

The Karger-Stein algorithm considerably boosts the probability of obtaining the minimum cut in any given trial to $\Omega(1/\log n)$ at the cost of $O(n^2 \log n)$ runtime per trial. The key observation is that the probability that the minimum cut survives a random edge contraction decreases severely the fewer vertices there are left. Therefore, repeated contractions when there are too many vertices remaining as in Karger's algorithm are wasteful. Instead, starting with $n$ vertices, Karger-Stein contracts edges at random until there are about $n/\sqrt{2}$ vertices left, and then returns the better of the cuts received by making two independent recursive calls on the resultant graph. The net runtime of $O(n^2 \log^2 n)$ is nearly optimal for dense graphs.

The natural questions we ask are: {\em How can Karger's algorithm and the Karger-Stein algorithm be improved using predictions? How error-resilient are the resultant prediction-augmented algorithms? What is the right measure of error in such predictions?}

\subsection{Results}  
This paper improves the runtime performance of Karger's algorithm and the Karger-Stein algorithm using predictions.  The first key question is: what should be predicted? The idea is to predict which edges are more or less likely to be in the global minimum cut. Of course, these predictions could be erroneous. We introduce new variants of these algorithms that robustly use these predictions.

\paragraph{Predictions.} Let $C^* \subseteq E$ be a minimum cut in $G$. Since a cut ultimately is a subset of edges that cross some partition of the set of vertices, let us begin by considering binary predictions for each edge $e \in E$ indicating whether or not $e \in C^*$. Let $\hatc \subseteq E$ be the predicted minimum cut. Note that $\hatc$ may not necessarily be a cut.  

Clearly, any edge in the symmetric set-difference $\hatc \triangle C^*$ constitutes an error. The majority of error measures considered in the algorithms with predictions literature are symmetric (e.g., in \citet{mitzenmacher2022algorithms,dinitz2021faster}); they penalize equally for over- and under- prediction. However, an important feature of our work is to disentangle the impact of these two types of errors on the algorithm's runtime. Concretely, the prediction error can be divided into false negatives $(\eta)$, and false positives $(\rho)$. We define $\eta$ as the ratio of the weight of the edges in the minimum cut but not in the prediction to the weight of the minimum cut and $\rho$ as the ratio of the weight of the edges in the prediction but not in the minimum cut to the weight of the minimum cut: 

\[\eta := \frac{w(C^* \setminus \hatc)}{w(C^*)}, \quad \rho := \frac{w(\hatc \setminus C^*)}{w(C^*)}.\]
We will see that a false negative is far more costly than a false positive in boosting the runtime of Karger-like algorithms. 

In fact, we prove and state our results in a more general setting that cleanly generalizes the above definitions to real-valued predictions. Here, each edge $e$ has an associated prediction $p_e \in [0,1]$, representing the probabilistic prediction of its inclusion in $C^*$. Now, $\eta$ and $\rho$ are defined as: 
\[\eta := \frac{\sum_{e\in C^*} (1-p_e)w(e)}{w(C^*)}, \quad \rho := \frac{\sum_{e \in E\setminus C^*}p_ew(e)}{w(C^*)}.\]

In the analysis, we will only use that $\eta$ and $\rho$ are valid upper bounds for the quantities defined above, and, for simplicity, we assume that $\rho$ is at least $1$.  We note that if there is more than one global minimum cut, $\eta$ and $\rho$ can be defined with respect to any fixed minimum cut. Due to this, our run-time guarantees hold with respect to the minimum cut that gives the best run-time in terms of $\eta$ and $\rho$.

\paragraph{Boosted Karger's Algorithm.} Our intuitive approach to taking advantage of predictions is to tweak the graph so that Karger's algorithm has a higher chance of contracting the edges that are not predicted to be in the minimum cut, so that the minimum cut has a higher chance of surviving. 

A reasonable guess a priori is that the amount of computational work needed to compute the minimum cut (and hence, the runtime) scales linearly in the quality of predictions, e.g. knowing about half of the edges in the minimum cut (given no false positives) reduces the total work by a factor of half. However, we show that the improvement is far more stark, and such predictions can eliminate an entire factor of $n$ from the runtime. Thus, even for fixed prediction quality, the multiplicative speed-up grows with the size of instance.

We prove the following theorem, demonstrating the potential for significant improvement in the running time of Karger's algorithm, provided that the predictions are not too erroneous.
\begin{theorem}\label{thm:boosted-omega}
    For a suitable setting of parameters, given predictions measured by $\eta$ and $\rho$ as defined above, the Boosted Karger's algorithm (\Cref{alg:boosted}) outputs a minimum cut with probability at least 
    $$\Omega\left(\frac{1}{n^{2\eta}{\rho}^{2(1-\eta)}}\right).$$
\end{theorem}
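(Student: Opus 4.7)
The plan is to analyze a two-phase variant of Karger's algorithm. In the first phase, edges are sampled for contraction proportionally to a boosted weight $q_e := w(e) \cdot ((1 - p_e) + \beta)$ for a small smoothing parameter $\beta > 0$, so that predictions suggesting an edge lies in the minimum cut reduce its sampling probability. This phase runs until the number of remaining vertices falls to a threshold $K_0$ of order $\rho$; the second phase then applies plain Karger to the resulting contracted graph. I expect the overall success probability to factor as (phase-1 survival) $\times$ (phase-2 survival), with the two contributions multiplying to $\Omega(1/(n^{2\eta} \rho^{2(1-\eta)}))$.

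The technical heart is the per-step analysis in phase one. Conditional on the minimum cut $C^*$ surviving all prior contractions, at a step with $k$ vertices remaining I would upper-bound the probability of contracting a cut edge by $(\sum_{e \in C^*} q_e) / (\sum_e q_e)$. The numerator equals $(\eta + \beta) w(C^*)$ by the definition of $\eta$. The denominator equals $(1+\beta) w(E_{\mathrm{curr}}) - \sum_e p_e w(e)$, and is bounded below by combining (i) the standard observation that every vertex in the contracted graph has weighted degree at least $w(C^*)$ (so $w(E_{\mathrm{curr}}) \geq k w(C^*)/2$), with (ii) the estimate $\sum_e p_e w(e) \leq (1 - \eta + \rho) w(C^*)$, which follows by splitting the sum over $C^*$ and $E \setminus C^*$ and invoking the two prediction-error definitions. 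Together these give a per-step cut-contraction probability of at most $\tfrac{2(\eta+\beta)}{(1+\beta)k - 2(1-\eta+\rho)}$.

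Picking $K_0 = \Theta(\rho)$ large enough ensures that throughout phase one the denominator is a constant fraction of its leading term $(1+\beta) k w(C^*)/2$, so the per-step bound essentially reduces to $2(\eta+\beta)/((1+\beta)k)$. Multiplying these survival probabilities over $k \in \{K_0, \ldots, n\}$ (handled cleanly via a ratio-of-Gamma-functions identity) yields a phase-one survival probability of $\Omega((K_0/n)^{2(\eta+\beta)/(1+\beta)})$. Choosing $\beta$ small drives the exponent toward $2\eta$. The second phase is classical Karger on a graph with $K_0 = O(\rho)$ vertices and thus contributes a survival factor of $\Omega(1/K_0^2) = \Omega(1/\rho^2)$. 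Multiplying yields $\Omega((\rho/n)^{2\eta} \cdot 1/\rho^2) = \Omega(1/(n^{2\eta}\rho^{2(1-\eta)}))$.

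The main obstacle is nailing down the constant in the $n^{2\eta}$ exponent: a crude application of the denominator lower bound produces an exponent like $4\eta$ because one must leave slack to absorb the subtractive $\rho$-dependent term. Achieving a tight $2\eta$ requires sending $\beta$ to zero while keeping $K_0$ only a constant multiple of $\rho$, and carefully tracking the prefactors in the telescoping product so that the hidden constants do not grow with $n$. A secondary subtlety is that the $\beta$-smoothing is necessary to keep every edge sampled with positive probability, avoiding a stall when predictions are overconfident on some vertex's incident edges.
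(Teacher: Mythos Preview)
Your proposal is correct and follows essentially the same approach as the paper. Your boosted weights $q_e = w(e)((1-p_e)+\beta)$ are exactly proportional to the paper's $w_B(e)=(1+(B-1)(1-p_e))w(e)$ under the reparameterization $\beta = 1/(B-1)$, so the sampling distributions and per-step contraction bounds coincide; the paper resolves your ``main obstacle'' by taking $B=\Omega(\log n)$ (equivalently, $\beta = O(1/\log n)$) so that the exponent $2(\eta+\beta)/(1+\beta)$ equals $2\eta$ up to an additive $O(1/\log n)$, which costs only a constant factor in the final bound, and by setting the threshold to $t=\lceil 3\rho+2\rceil$ so that $t-2\rho-2=\Omega(\rho)$ absorbs the subtractive term into the base rather than the exponent.
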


Let us compare this to the $\Omega(1/n^2)$ probability of recovering the true minimum cut in the standard Karger's algorithm. Regardless of the value of $\eta$, which is always in $[0,1]$, the result in \Cref{thm:boosted-omega} is better than Karger's algorithm as long as $\rho \leq o(n)$. Thus, the result shows remarkable resiliency to error: Even if none of the minimum cut edges is in the prediction, and the predicted edges are almost $n$ times as many as the minimum cut, the probability of recovering the minimum cut is no worse than Karger's algorithm.

To see the utility of this result, consider when the error is small, e.g. if $\rho$ is a constant, then the probability of success of Boosted Karger's algorithm is $\Omega\left({1}/{n^{2\eta}}\right)$, which is significantly better than that of Karger's algorithm, $\Omega\left({1}/{n^2}\right)$.

\paragraph{Boosting the Karger-Stein Algorithm.} For Karger-Stein, it is important first to carve out the possible regime of improvement. For dense graphs, that is, if $m=\Theta(n^2)$, Karger-Stein is already nearly optimal. For sparse graphs, the best one may hope for is a near-linear runtime of $O(m)$. Therefore, depending on the quality of the predictions, one may hope to interpolate these. This is what our results deliver.

As we will see, our earlier result relied on improving the minimum cut's probability of surviving a single random edge contraction. The Karger-Stein analysis is not directly well suited to make use of this effect. Instead, we adapt a variant, here eponymously termed FPZ, introduced in \citet{fox2019minimum}, who obtain Karger-Stein-style guarantees for finding minimum cuts in hypergraphs. Their analysis was greatly simplified recently by \citet{karger2021recursive}, which we borrow. The difference in FPZ vs. Karger-Stein is that the former executes a single edge-contraction in each step, but makes a random number of recursive calls; the propensity of these is closely tied to the survival probability of a minimum cut during an edge contraction. Here, in addition to tweaking the graph so that random edge contractions are more likely to contract edges outside the predicted set, we modify the propensity for these recursive calls. In the end, we prove the following.

\begin{theorem}\label{thm:ks-main}
    For a suitable setting of parameters, given predictions measured by $\eta$ and $\rho$ as defined above, the Boosted FPZ algorithm runs in time 
	\[\begin{cases}
    		O(m^{1-\eta}n^{2\eta}\log n) & \text{ if }\rho=O(\sqrt{m}),\text{ and}\\
    		O(\rho^{2(1-\eta)}n^{2\eta} \log n) & \text{ otherwise.}
    	\end{cases}\]
    It outputs a minimum cut with probability at least $\Omega(\frac{1}{\log n})$.
\end{theorem}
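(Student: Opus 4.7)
The strategy is to combine the boosted edge-sampling distribution used in \Cref{alg:boosted} with the FPZ-style single-contraction recursion of \citet{fox2019minimum}, analyzed via the framework of \citet{karger2021recursive}. Let $q \in (0,1]$ be a boosting parameter (to be tuned), and let the algorithm sample an edge $e$ for contraction with probability proportional to $w(e)\cdot q^{p_e}$, the same reweighting used to prove \Cref{thm:boosted-omega}. After contracting one edge, the algorithm spawns a random number $B_k$ of independent recursive calls on the contracted graph, where $k$ is the current vertex count and $B_k$ is a nonnegative integer random variable with $\E[B_k]=1/s_k$ and $s_k$ the per-step survival probability of the fixed min cut $C^*$ computed below.

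First I would bound $s_k$. The boosted weight of $C^*$ is at most $w(C^*)\bigl(\eta + q(1-\eta)\bigr)$, while the boosted weight of $E\setminus C^*$ is at least $w(E)-w(C^*)-\rho\, w(C^*)+q\rho\, w(C^*)$; combining with $w(E)\ge k\, w(C^*)/2$ gives $1-s_k\le \alpha(\eta,\rho,q)/k$ for an explicit $\alpha$. Optimizing $q$ in $\alpha$ recovers the exponents in \Cref{thm:boosted-omega}: when $\rho$ is small enough that $k$ dominates, $q$ can be driven aggressively toward $0$ yielding $\alpha\to 2\eta$; when $\rho$ is the binding constraint, $q$ is chosen to balance the numerator and the $\rho$-correction in the denominator, interpolating between the $\eta$ and $1-\eta$ exponents. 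This optimization is precisely where the case split $\rho=O(\sqrt m)$ versus $\rho=\omega(\sqrt m)$ in the theorem arises.

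For the success probability, the Karger-style branching-process argument of \citet{karger2021recursive} applies essentially verbatim: since at each level we inflate the branching by exactly $1/s_k$, the expected number of leaves preserving $C^*$ is at least one, and a martingale/second-moment analysis upgrades this to $\Omega(1/\log n)$, matching the theorem. The main wrinkle is that $\E[B_k]=1/s_k$ is not integral, so $B_k$ must be realized as a randomized rounding as in the FPZ construction; the analysis goes through as long as the variance is controlled, which it is because $1/s_k-1=O(1/k)$.

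For the runtime, let $R(k)$ be the expected total work on a $k$-vertex subproblem. Then $R(k)=(1/s_k)(R(k-1)+\tau_k)$, where $\tau_k$ is the per-node work for one boosted sampling plus contraction, implemented in $O(m_k)$ (or amortized better) using the data structures standard in the FPZ/Karger-Stein literature. Unrolling gives $R(n)=\sum_{k=2}^{n}\tau_k\prod_{j=k+1}^{n}(1/s_j)$; substituting the bound $1-s_j\le \alpha(\eta,\rho,q^*)/j$ at the optimal $q^*$ telescopes the product to a polynomial in $n/k$, and summing over $k$ yields the two claimed bounds. The main obstacle will be (i) choosing $q$ so that the per-step survival estimate \emph{and} the resulting branching multiplier jointly give the advertised exponents, not just one or the other, and (ii) handling the telescoping cleanly since $\prod 1/s_j$ is not of the pure $j/(j-c)$ form once the $\rho$-correction is retained, requiring a careful split of the sum at a transition scale that depends on $\rho$ and $m$.
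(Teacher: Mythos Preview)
Your high-level plan---boost the sampling weights, use FPZ's single-contraction recursion with branching calibrated to the per-step survival probability, and analyze via the \citet{karger2021recursive} framework---matches the paper's. But two concrete ingredients are missing, and without them the argument does not yield the stated bounds.

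\textbf{The threshold $t$.} Your claimed inequality $1-s_k \le \alpha(\eta,\rho,q)/k$ with a $k$-independent $\alpha$ does not hold. Carrying your own computation through gives
\[
1-s_k \;\le\; \frac{2\bigl(\eta + q(1-\eta)\bigr)}{k - 2(1-q)(\rho + 1-\eta)},
\]
which blows up as $k\downarrow 2\rho$. For small $k$ the branching factor $1/s_k$ becomes enormous (or undefined), so the algorithm as you describe it does not even have a finite expected runtime. The paper fixes this by an \emph{algorithmic} switch, not merely an analytical split: once $k\le t$ (with $t\ge 3\rho+2$) it reverts to the \emph{un-boosted} FPZ algorithm and its standard branching factor $1-2/k$. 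This threshold is also where the case split in the theorem actually originates: one sets $t=\max\{\lceil 3\rho+2\rceil,\sqrt{m}\}$, and the two cases are simply which term of the max is active. The split does \emph{not} come from optimizing the boosting parameter; the paper fixes $B=\Omega(\log n)$ uniformly.

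\textbf{The amortized edge-count recurrence.} Your recurrence $R(k)=(1/s_k)(R(k-1)+\tau_k)$ with $\tau_k=O(m_k)$ is too coarse: since $m_k$ can remain $\Theta(m)$ for most levels, unrolling gives $\sum_{k>t} m\cdot(n/k)^{2\eta}$, which is $\Theta(mn)$ for $\eta<1/2$ and never recovers the $m^{1-\eta}$ exponent. The paper's analysis instead tracks \emph{two} variables, $T(k,\ell)$ with $\ell$ the number of edges not yet sampled, implements contractions by lazy sampling without replacement through a union-find (with deletions, to allow depth-first rollback), and obtains
\[
T(k,\ell)\;\le\;\frac{1}{q_k}\max_{\ell'<\ell}\bigl\{T(k-1,\ell') + (\ell-\ell')\cdot O(\log n)\bigr\}.
\]
Unfolding this yields $T(n,m)\le O(m\log n + t^2\log t)\big/\prod_{j>t} q_j$, i.e., the total edge work telescopes to $O(m)$ rather than being paid at every level. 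Plugging in $\prod_{j>t} q_j \ge (t/n)^{2\eta}$ from \Cref{lem:algebraic} and the choice of $t$ above gives the two claimed bounds. Your parenthetical ``(or amortized better)'' is exactly the step that carries the proof, and it requires the two-variable recurrence and the specific data structures; it cannot be absorbed into the one-variable recurrence you wrote down.
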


For a large and forgiving regime of false positives, when $\rho= O(\sqrt{m})$, the running time multiplicatively interpolates that of Karger-Stein and a near-linear time algorithm, depending on the value of $\eta$. For small $\eta$, it is almost linear-time. In fact, the improvement over Karger-Stein persists regardless of the value of $\eta$ as long as $\rho\leq o(n)$.

We note that our running-time analysis, along with the underlying data structures supporting the implementation, differs from prior work. Importantly, our recurrence analysis is careful as to how many edges are processed in each iteration, effectively amortizing the total work done across multiple levels of recursion.

\paragraph{Learning Near-Optimal Predictions.} We also give a learning algorithm to learn near-optimal predictions from solutions to past instances. Specifically, given a distribution over graphs, from which the learning algorithm can draw samples, we show how near-optimal predictions minimizing average runtime over the distribution may be computed in polynomial time and sample complexity.

\paragraph{Experiments.} We conduct three sets of experiments ranging from synthetic to real datasets.

\paragraph{Limitations.} One limitation of our approach is that the setting of the parameters needed for the theoretical results depends on the knowledge of $\eta$ and $\rho$ (or at least on upper bounds for them). Given a family of instances, it might be possible to conservatively estimate these parameters, but we do not pursue this here. However, in our experiments, we do not assume access to such information and apply the same problem-agnostic parameters uniformly across instances. Our empirical results strongly suggest that the algorithms are insensitive to these parameters for a wide-ranging degree of errors, and this might not be a limitation in practice.

\section{Boosted Karger's Algorithm}
We discuss the Boosted Karger's algorithm, and we prove an improved lower bound for its probability of success. The algorithm has two parameters, a scalar and a threshold. The algorithm {\it boosts} the edges in $E \setminus \hatc$, meaning it multiplies the weights of the edges that fall outside the predicted set by a large scalar and then performs random edge contractions similarly as in Karger's algorithm provided there is a sufficient number of vertices remaining. At this point, each remaining vertex (or properly, {\em metavertex}) corresponds to a subset of the original vertex set. When fewer vertices remain than the specified threshold, our algorithm reverts to the standard (i.e., not {\em boosted}) Karger's algorithm on the remaining metavertices.

\begin{algorithm}[h]
\caption{Boosted Karger's Algorithm}
\label{alg:boosted}
\begin{algorithmic}[1]
\STATE \textbf{Input:} graph $G = (V, E, w)$, predictions $\{p_e\}_{e \in E}$.
\STATE \textbf{Parameters:} scalar $B$, threshold $t$.
\STATE Build $G_B = (V,E,w_B)$, where $w_B(e) = (1 + (B-1)(1-p_e))w(e).$
\WHILE{there are $> t$ vertices left in $G_B$}
    \STATE Pick an edge $\Bar{e}$ with probability $w_B(\Bar{e})/\sum_{e\in E(G_B)} w_B(e)$ and contract it.
\ENDWHILE
\STATE Define $G' = (V', E', w) :=$ subgraph of $G$ induced by the remaining $t$ metavertices in $G_B$.
\WHILE{there are at least 3 vertices left in $G'$}
    \STATE Pick an edge $\Bar{e}$ with probability $w(\Bar{e})/\sum_{e\in E(G')} w(e)$ and contract it.
\ENDWHILE
\STATE\textbf{return} the set of edges in $G$ between the two remaining metavertices.
\end{algorithmic}
\end{algorithm}
We refer to the steps on lines 3-6 of the algorithm above as the {\it first phase}, and the execution of the standard Karger's algorithm on lines 7-10 as the {\it second phase}. For brevity of notation, we define the sequence
\[\pp_i \coloneqq 1 - \frac{1 + (B-1)\eta}{Bi/2 - (B-1)(\rho + (1-\eta))}.\]

We begin the analysis by establishing the survival probability of a fixed minimum cut during a single randomized edge contraction. 
\begin{lemma}\label{lem:single-contraction}
Fix a weighted graph $G$, and let $C^* \subseteq E(G)$ be a minimum cut in $G$, with respect to which $\eta$ and $\rho$ are defined. Now consider a weighted graph $G'=(V,E,w)$ with $k$ vertices obtained by a sequence of edge contractions starting from $G$ such that no edge from $C^*$ has been contracted in any of these contractions. Let $w_B(e) = (1+(B-1)(1-p_e))w(e)$ for all edges $e$ in $E$. Then the probability that none of the edges in $C^*$ is contracted in a single randomized edge contraction in $G'$, where an edge $e$ is chosen with probability $w_B(e)/w_B(E)$, is at least $\pp_{k}$, as long as $k\geq 2\rho+2$.
\end{lemma}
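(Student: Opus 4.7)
The plan is to write the survival probability as $1 - w_B(C^*)/w_B(E(G'))$ and bound the ratio directly. Setting $X := w_B(C^*)$ and $Y := w_B(E(G') \setminus C^*)$, so that $w_B(E(G')) = X + Y$, I will produce a sharp upper bound $X_{\mathrm{ub}}$ on $X$ and a lower bound $Y_{\mathrm{lb}}$ on $Y$, then appeal to the monotonicity of the map $(X,Y) \mapsto X/(X+Y)$.

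First, I would compute $X$. Since every edge in $C^*$ survives in $G'$ --- contractions of non-$C^*$ edges can only occur within one side of the partition $(S,T)$ induced by $C^*$, so $C^*$ still crosses the induced partition of metavertices, and no $C^*$ edge has been contracted --- linearity and the definition of $\eta$ give
\[ X = w(C^*) + (B-1)\sum_{e \in C^*}(1 - p_e)w(e) \;\leq\; (1 + (B-1)\eta)\,w(C^*) \;=:\; X_{\mathrm{ub}}. \]
Next, I would bound $Y$ from below. Using the algebraic rewriting $w_B(e) = B\,w(e) - (B-1)p_e w(e)$ and the $\rho$-bound on false positives,
\[ Y \;=\; B\,w(E(G') \setminus C^*) - (B-1)\!\!\sum_{e \in E(G') \setminus C^*}\!\!\! p_e w(e) \;\geq\; B\,w(E(G') \setminus C^*) - (B-1)\rho\, w(C^*). \]
Because edge contraction cannot decrease the minimum cut, the minimum cut of $G'$ still has weight at least $w(C^*)$, so every metavertex has $w$-degree at least $w(C^*)$; summing degrees and dividing by two yields $w(E(G')) \geq kw(C^*)/2$ and thus $w(E(G') \setminus C^*) \geq (k/2 - 1)w(C^*)$. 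Plugging in,
\[ Y \;\geq\; w(C^*)\bigl[Bk/2 - B - (B-1)\rho\bigr] \;=:\; Y_{\mathrm{lb}}, \]
which is strictly positive whenever $k \geq 2\rho + 2$.

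Finally, the map $g(X,Y) := X/(X+Y)$ is increasing in $X$ and decreasing in $Y$ on the positive quadrant, so $g(X,Y) \leq g(X_{\mathrm{ub}}, Y_{\mathrm{lb}})$. Routine simplification of $X_{\mathrm{ub}} + Y_{\mathrm{lb}} = w(C^*)[Bk/2 - (B-1)(\rho + 1 - \eta)]$ then yields
\[ \frac{w_B(C^*)}{w_B(E(G'))} \;\leq\; \frac{1 + (B-1)\eta}{Bk/2 - (B-1)(\rho + 1 - \eta)}, \]
so the probability of no $C^*$-edge being contracted is at least $\pp_k$. I expect the main subtlety to be choosing this bounding strategy carefully: a seemingly innocuous weakening such as $w_B(C^*) \geq w(C^*)$ when lower-bounding $w_B(E(G'))$ would lose the sharp $+(B-1)\eta$ contribution in the denominator; routing everything through the two-variable monotonicity of $g$ preserves it. The hypothesis $k \geq 2\rho + 2$ is precisely what makes $Y_{\mathrm{lb}}$ positive so the resulting bound is meaningful.
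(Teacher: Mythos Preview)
Your proof is correct and follows essentially the same approach as the paper: both express the survival probability as $1 - w_B(C^*)/w_B(E)$, compute $w_B(C^*) = (1+(B-1)\eta)w(C^*)$, lower-bound the total $w$-weight via the degree argument $w(E(G')) \geq k\,w(C^*)/2$, and arrive at the identical ratio bound. The only cosmetic difference is bookkeeping: the paper lower-bounds $w_B(E)$ directly (absorbing the $C^*$ contribution into a $-(B-1)(1-\eta)w(C^*)$ term), whereas you split $w_B(E) = X + Y$ and invoke monotonicity of $X/(X+Y)$; your ``subtlety'' about preserving the $(B-1)\eta$ term is thus an artifact of your decomposition and does not arise in the paper's direct route.
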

\begin{proof}
    Let us start by observing that the probability that $C^*$ remains intact after a random edge contraction is $1 - w_B(C^*)/w_B(E)$. Now we have 
    \begin{align*}
        w_B(C^*) &= \sum\limits_{e \in C^*} (1 + (B-1)(1-p_e))w(e) = w(C^*)\left( 1 + (B-1)\eta\right).
    \end{align*}
 Additionally, for the surviving edges $E$, we have     
    \begin{align*}
        w_B(E) &= \sum\limits_{e \in E} (1 + (B-1)(1-p_e))w(e) \\
        &=\sum\limits_{e \in E}(B - (B-1)p_e)w(e)\\
        &\geq  Bw(E) -(B-1)(\rho w(C^*) + (1-\eta) w(C^*)),
    \end{align*}
    where the last derivation is an inequality for the sole reason that not all of the original false positive edges may have survived by this stage, that is, in earlier contractions used to arrive at $G'$. Note that by now, since each (meta) vertex $v$  corresponds to a cut in the original graph $G$, $w(\delta(v)) \geq w(C^*)$, where $\delta(v)$ is the set of edges incident on $v$. Since every edge has two vertices, $2w(E)=\sum_{v\in V} w(\delta(v))$. Therefore, we have
    \begin{align*}
        w_B(E) &\geq B \left(\frac{kw(C^*)}{2}\right) -(B-1) \left(\rho w(C^*)+(1-\eta)w(C^*)\right)  \\
        &= w(C^*)\left(\frac{Bk}{2} - (B-1) \left(\rho+(1-\eta)\right)\right).
    \end{align*}
We can now write 
\begin{align*}
    1 - \frac{w_B(C^*)}{w_B(E)} \geq 1 - \frac{w(C^*)\left( 1 + (B-1)\eta\right)}{w(C^*)\left(Bk/2 - (B-1) \left(\rho+(1-\eta)\right)\right)} = \pp_{k}.
\end{align*}   
\end{proof}

Next, we state and prove the following elementary inequality that will be used to prove the main result of this section. 
\begin{lemma}\label{lem:algebraic}
    For all $t \geq 2\rho + 2$, it holds
    \begin{align*}
        \prod\limits_{i = t+1}^n \pp_i
        &\geq \left(\frac{t-2\rho-2}{n} \right)^{2\left(\eta + \frac{1-\eta}{B}\right)}.
    \end{align*}
\end{lemma}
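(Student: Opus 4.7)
The plan is to recast each $\pp_i$ as a ratio of linear expressions in $i$, bound $\ln \pp_i$ via a mean-value estimate, replace the resulting sum by an integral, and finally absorb a constant shift into the advertised base $(t-2\rho-2)/n$. To set things up, let $a := 1 + (B-1)\eta$, so that the target exponent equals $2a/B = 2(\eta + (1-\eta)/B)$. Multiplying numerator and denominator in the definition of $\pp_i$ by $2/B$ gives
\[\pp_i = \frac{i - v}{i - u}, \quad\text{where}\quad u := \frac{2(B-1)(\rho+1-\eta)}{B}, \quad v := u + \frac{2a}{B} = \frac{2(B-1)\rho}{B} + 2.\]
Note $0 \leq u \leq v \leq 2\rho + 2$, and the hypothesis $t \geq 2\rho + 2$ ensures $i - v > 0$ for every $i \geq t + 1$. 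So the product reduces to $\prod_{i=t+1}^n (i-v)/(i-u)$, a ratio whose numerator-denominator shift is exactly $v - u = 2a/B$.

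Next I lower-bound $\ln\pp_i$. By the mean value theorem applied to $\ln$ on $[i-v, i-u]$, $\ln(i-u) - \ln(i-v) \leq (v-u)/(i-v) = (2a/B)/(i-v)$, hence $\ln\pp_i \geq -(2a/B)/(i-v)$. Summing and using that $x \mapsto 1/(x-v)$ is positive and decreasing on $(v, \infty)$ to pass from a sum to an integral,
\[\sum_{i=t+1}^n \ln\pp_i \geq -\frac{2a}{B}\sum_{i=t+1}^n \frac{1}{i-v} \geq -\frac{2a}{B}\int_t^n \frac{dx}{x-v} = \frac{2a}{B}\ln\frac{t-v}{n-v}.\]
Exponentiating yields $\prod_{i=t+1}^n \pp_i \geq \bigl((t-v)/(n-v)\bigr)^{2a/B}$.

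It remains to pass from $(t-v)/(n-v)$ to $(t-2\rho-2)/n$. Since $v \geq 0$ gives $n - v \leq n$ and $v \leq 2\rho + 2$ gives $t - v \geq t - 2\rho - 2$, we obtain $(t-v)/(n-v) \geq (t-v)/n \geq (t - 2\rho - 2)/n \geq 0$; raising to the positive power $2a/B$ preserves the inequality and delivers the advertised bound. The main source of friction in this argument is really just the algebraic bookkeeping: verifying $v \leq 2\rho + 2$ from its definition (so that the final substitution goes in the right direction, which matters because the base is less than $1$) and matching $2a/B$ to the advertised $2(\eta + (1-\eta)/B)$. The $\ln$-vs-integral trick itself is standard and reduces, in the unweighted $B = 1$ case, to the familiar Karger identity $\prod_{i=t+1}^n (1-2/i) = \Theta(t^2/n^2)$; what is new here is that the effective exponent $2a/B$ is generally non-integer, which is precisely why the clean telescoping of that identity is replaced by the integral estimate above.
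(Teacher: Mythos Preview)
Your proof is correct and follows essentially the same route as the paper: both arguments lower-bound each factor via $\ln(1-x)\geq -x/(1-x)$ (you obtain this from the mean value theorem, the paper invokes it directly), replace the resulting sum by an integral using monotonicity, and then simplify. Your reparametrization $q_i=(i-v)/(i-u)$ tidies the algebra and spares you the index shift the paper performs, but the underlying mechanism is identical.
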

\begin{proof}
We first apply the inequality $1 - x \geq e^{\frac{-x}{1-x}}$, which holds for all $x<1$, to get
    \begin{align*}
        \prod\limits_{i = t+1}^n \left( 1 - \frac{1 + (B-1)\eta}{Bi/2 - (B-1)(\rho + (1-\eta))}\right)
        &\geq \exp\left({-\sum_{i=t+1}^n \frac{1 + (B-1)\eta}{B(i/2 -1) -(B-1)\rho}}\right) \\
        &= \exp\left({-\sum_{i=t-1}^{n-2} \frac{1 + (B-1)\eta}{Bi/2 -(B-1)\rho}}\right).
    \end{align*}
Note that for a non-decreasing function $f$, we have $\int_{L-1}^U f(x)dx \leq \sum_{i=L}^{U} f(i)$. Therefore, we can write
\begin{align*}
    \exp\left({-\sum_{i=t-1}^{n-2} \frac{1 + (B-1)\eta}{Bi/2 -(B-1)\rho}}\right) \geq \exp\left({-\int_{t-2}^{n-2} \frac{1 + (B-1)\eta}{Bx/2 -(B-1)\rho}}dx\right).
\end{align*}
The condition $t \geq 2\rho + 2$ ensures that $f$ is non-decreasing in the desired interval. From here, we just need to carry out the calculations and simplify the expressions: 
\begin{align*}
    \exp\left({-\int_{t-2}^{n-2} \frac{2 + 2(B-1)\eta}{Bx -2(B-1)\rho}}dx\right)
    &= \exp\bigg(-\left(2 + 2(B-1)\eta\right) {\int_{t-2}^{n-2} \frac{dx}{Bx -2(B-1)\rho}}\bigg) \\
    &=\exp\bigg(-\left(\frac{2 + 2(B-1)\eta}{B}\right) \ln\left(Bx - 2(B-1)\rho\right)|_{t-2}^{n-2} \bigg) \\
    &= \left(\frac{B(t-2)-2(B-1)\rho}{B(n-2)-2(B-1)\rho}\right)^{\left(\frac{2 + 2(B-1)\eta}{B}\right)} \\
    &= \left(\frac{B(t-2-2\rho)+2\rho}{Bn-2B-2(B-1)\rho}\right)^{2\left(\eta + \frac{1 - \eta}{B}\right)} \\
    &\geq \left(\frac{B(t-2-2\rho)}{Bn}\right)^{2\left(\eta + \frac{1 - \eta}{B}\right)}.  
\end{align*}\end{proof}

We are now ready to prove the following theorem:
\begin{theorem}\label{thm:boosted-exact}
    Let $C^*\subseteq  E$ be a minimum cut in the weighted input graph $G=(V,E,w)$, with respect to which $\eta$ and $\rho$ are defined. Then, assuming $t \geq 2\rho + 2$, the probability that none of the edges of $C^*$ are contracted in the first phase of \Cref{alg:boosted} is at least $\left(\frac{t-2\rho-2}{n} \right)^{2\left(\eta + \frac{1-\eta}{B}\right)}.$   
\end{theorem}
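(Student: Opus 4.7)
The plan is to reduce the theorem directly to the two lemmas that precede it. The first phase of \Cref{alg:boosted} performs exactly $n-t$ random edge contractions on $G_B$, taking the vertex count from $n$ down to $t$. Let $A_k$ denote the event that no edge of $C^*$ is contracted during the step in which the graph has $k$ (meta)vertices, for $k=n,n-1,\dots,t+1$. Then the event of interest is $\bigcap_{k=t+1}^{n} A_k$, and by the chain rule for conditional probability it suffices to lower-bound
\[\pP\!\left(\bigcap_{k=t+1}^n A_k\right)=\prod_{k=t+1}^n \pP\!\left(A_k \;\Big|\; \bigcap_{j=k+1}^n A_j\right).\]

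For each factor, I would invoke \Cref{lem:single-contraction}. Conditioned on no prior edge of $C^*$ having been contracted, the graph at the moment it has $k$ vertices is precisely a graph obtained from $G$ by a sequence of contractions that spared $C^*$, and the next contraction in \Cref{alg:boosted} is exactly the boosted random contraction governed by $w_B$. So the hypothesis of \Cref{lem:single-contraction} is met, provided $k \geq 2\rho+2$. Since $k$ ranges over $t+1,\dots,n$ and we assume $t \geq 2\rho+2$, this condition holds throughout the product. Hence each conditional factor is at least $\pp_k$, giving
\[\pP\!\left(\bigcap_{k=t+1}^n A_k\right)\;\geq\;\prod_{k=t+1}^n \pp_k.\]

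Finally, I would apply \Cref{lem:algebraic}, which under the same assumption $t\geq 2\rho+2$ states exactly that $\prod_{i=t+1}^n \pp_i \geq \big((t-2\rho-2)/n\big)^{2(\eta+(1-\eta)/B)}$, yielding the claimed bound. The only potentially subtle point is the first step, namely justifying that the conditional distribution of the contracted graph at the instant it has $k$ vertices satisfies the premise of \Cref{lem:single-contraction}; this is immediate once one observes that the boosted weights $w_B$ on any contracted graph are precisely the values obtained by applying $w_B(e)=(1+(B-1)(1-p_e))w(e)$ to the original edges that have survived the contractions, and that the sampling rule in line 5 of \Cref{alg:boosted} is exactly the rule analyzed in \Cref{lem:single-contraction}. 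No additional computation is required beyond invoking the two lemmas in sequence.
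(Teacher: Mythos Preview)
Your proposal is correct and mirrors the paper's proof almost exactly: the paper indexes the events by step number $i$ (with $n-i+1$ vertices remaining) rather than by the current vertex count $k$, but otherwise applies the chain rule, invokes \Cref{lem:single-contraction} for each conditional factor, and finishes with \Cref{lem:algebraic} just as you do. If anything, you are slightly more explicit than the paper in checking that $k\geq t+1\geq 2\rho+3$ so the hypothesis of \Cref{lem:single-contraction} is satisfied at every step.
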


\begin{proof}
Let $E_i$ be the event that none of the edges of $C^*$ are contracted in step $i$ of the algorithm. At the start of step $i$, there are $n-i+1$ remaining vertices. Now, by \Cref{lem:single-contraction}, we have 
\[\Pr(E_i | \{E_j\}_{j < i}) \geq \pp_{n-i+1}.\]

Let $A$ be the event that none of the edges of $C^*$ are contracted in the first phase of the algorithm, that is, until there are at least $t$ vertices left. 
Then, we have 
\begin{align*}
    \Pr(A)&= \Pr(E_1 \cap E_2 \cap \cdots \cap E_{n-t}) \\
        &= \Pr(E_1)\Pr(E_2|E_1)\cdots \Pr(E_{n-t}|E_1E_2\cdots E_{n-t-1}) \\
        &\geq \prod\limits_{i = t+1}^n \pp_i.
\end{align*}

To conclude the claim, we invoke \Cref{lem:algebraic}.
\end{proof}

Theorem \ref{thm:boosted-omega} can now be obtained from Theorem \ref{thm:boosted-exact} by choosing $t$ to be the smallest integer exceeding $3\rho+2$ and any $B=\Omega(\log n)$, because in the second phase Karger's algorithm ensures that the minimum cut has at least $\Omega(1/t^2)$ probability of continued survival.
\section{Boosting the Karger-Stein Algorithm}
In this section, we present a variant of the FPZ algorithm due to \citet{fox2019minimum} that utilizes predictions to improve over the running time of Karger-Stein.

The standard FPZ algorithm is the following. In the algorithm, $\pp'_n \coloneqq 1-2/n$ is a lower bound on the probability that a fixed minimum cut $C^*$ remains intact after a single random  edge contraction on $n$ vertices, assuming none of its edges have been contracted thus far.

\algloop{With}
\algcloop{With}{Otherwise}
\begin{algorithm}[h]
\caption{$\textsc{FPZ}(G, n)$}
\label{alg:fpz-standard}
\begin{algorithmic}[1]
\STATE \textbf{Input:} graph $G = (V, E, w)$ with $n$ vertices.
\STATE \textbf{Parameters:} branching factor $\pp'_n$.
\IF{$n = 2$}
    \STATE \textbf{return} the set of edges in $G$ between the two remaining metavertices.
\ENDIF
\STATE Pick an edge $\Bar{e}$ with probability $\propto w(\Bar{e})$, that is, with probability $w(\Bar{e})/\sum_{e\in E} w(e)$.
\STATE Contract $\Bar{e}$ in $G$ to produce $G'$.
\STATE $C_1 \gets \textsc{FPZ}(G', n-1)$.
\With{ probability $\pp'_n$,}
    \STATE \textbf{return} $C_1.$
\Otherwise
    \STATE $C_2 \gets \textsc{FPZ}(G, n)$.
    \STATE \quad \textbf{return} the cut from $\{C_1, C_2\}$ with the smaller weight.
\end{algorithmic}
\end{algorithm}

The boosted variant is as follows. 

\algloop{With}
\algcloop{With}{Otherwise}
\begin{algorithm}[h]
\caption{$\textsc{BoostedFPZ}(G, n, p)$}
\label{alg:fpz2}
\begin{algorithmic}[1]
\STATE \textbf{Input:} graph $G = (V, E, w)$ with $n$ vertices, predictions $\{p_e\}_{e \in E}$.
\STATE \textbf{Parameters:} scalar $B$, threshold $t$, branching factor $\pp_n$.
\IF{$n = 2$}
    \STATE \textbf{return} the set of edges in $G$ between the two remaining metavertices.
\ELSIF{$n\leq t$}
    \STATE \textbf{return} FPZ(G, n). {\color{gray} // In other words, run the standard ({\em non-boosted}) FPZ algorithm.}  
\ENDIF
\STATE Let $w_B(e) := (1 + (B-1)(1-p_e))w(e)$ for all $e \in E$.
\STATE Pick an edge $\Bar{e}$ with probability $w_B(\Bar{e})/\sum_{e\in E} w_B(e)$ and contract it to produce $G'$.
\STATE $C_1 \gets \textsc{BoostedFPZ}(G', n-1, p)$.
\With{ probability $\pp_n$,}
    \STATE \textbf{return} $C_1.$
\Otherwise
    \STATE $C_2 \gets \textsc{BoostedFPZ}(G, n, p)$.
    \STATE \quad \textbf{return} the cut from $\{C_1, C_2\}$ with the smaller weight.
\end{algorithmic}
\end{algorithm}
In the algorithm above, $\pp_n$ represents a lower bound on the probability that a fixed minimum cut $C^*$ remains intact after a single random boosted edge contraction on $n$ vertices, assuming none of its edges have been contracted thus far. Given prediction $p_e \in [0,1]$ for each edge $e$, we apply our previous idea of reweighting the edges with parameter $B$ to encourage the contraction of edges that lie outside the predicted set. We also add a switching point $t$ to the algorithm, as we did before. Whenever fewer than $t$ vertices remain in the graph, where $t \geq 3\rho + 2$, the algorithm then invokes the standard FPZ algorithm. This time, we use
\[\pp_n \coloneqq 1 - \frac{ 1 + (B-1)\eta}{Bn/2 - (B-1) \left(\rho+(1-\eta)\right)},\]
where $\eta$ and $\rho$ are defined with respect to $C^*$, for $n > t$, and $\pp_n \coloneqq 1-2/n$ for $n \leq t$. We refer to this modified version of the FPZ algorithm as the Boosted FPZ. First, we establish the following lower bound on the success probability of the Boosted FPZ algorithm. 

\begin{theorem}\label{thm:BFPZ-success}
    For any threshold $t$ satisfying $t\geq 3\rho + 2$, the probability that \Cref{alg:fpz2} returns a minimum cut is $\Omega\left(\frac{1}{\log t + \eta \log (n/t)}\right)$.
\end{theorem}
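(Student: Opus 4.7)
The plan is to analyze the success probability via a recurrence on the number of vertices. Let $P(n)$ denote the probability that $\textsc{BoostedFPZ}$, run on an $n$-vertex descendant of $G$ in which no minimum-cut edge has yet been contracted, returns the minimum cut. Tracing the two return branches of the algorithm and using that the two recursive calls $C_1$ and $C_2$ are probabilistically independent, one obtains
\begin{equation*}
P(n) \;\geq\; q_n \cdot s_n P(n-1) \;+\; (1-q_n)\bigl[1 - (1 - s_n P(n-1))(1 - P(n))\bigr],
\end{equation*}
where $s_n$ is the lower bound on the minimum-cut survival probability under a single edge contraction: the boosted bound from \Cref{lem:single-contraction} when $n > t$, and the standard $1 - 2/n$ when $n \leq t$. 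Crucially, the algorithm's branching probability $q_n$ equals $s_n$ by construction.

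The first key step is the substitution $z_n := 1/P(n)$. A short algebraic manipulation shows the recurrence is equivalent to $1/P(n) \leq q_n/(s_n P(n-1)) + (1-q_n)$; with the design choice $q_n = s_n$ the cross-terms cancel and this collapses to the clean telescoping form
\begin{equation*}
z_n \;\leq\; z_{n-1} + (1 - s_n).
\end{equation*}
Without the precise matching $q_n = s_n$, no such linearization is available, so this is the crucial algorithmic-analytic interface.

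Next, split along the threshold. For $n \leq t$ the algorithm reverts to standard FPZ, and the same telescoping with $1 - s_n = 2/n$ yields the classical bound $z_t = O(\log t)$. For $n > t$, unfolding from $t$ up to $n$ gives
\begin{equation*}
z_n \;\leq\; z_t + \sum_{i=t+1}^{n} (1 - s_i) \;=\; O(\log t) \;+\; \sum_{i=t+1}^{n} \frac{1 + (B-1)\eta}{Bi/2 - (B-1)(\rho + 1 - \eta)}.
\end{equation*}
The assumption $t \geq 3\rho + 2$ is exactly what ensures $i - 2(\rho + 1 - \eta) \geq t/3$ throughout the summation range, so the denominator is bounded below by $\Omega(Bi)$, and a standard integral comparison bounds the sum by $\tfrac{2(1 + (B-1)\eta)}{B}\log(3n/t)$. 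Taking $B$ sufficiently large drives the leading coefficient to $2\eta + o(1)$, yielding $z_n = O(\log t + \eta \log(n/t))$ and therefore $P(n) = \Omega(1/(\log t + \eta \log(n/t)))$, as claimed.

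The main obstacle is the recurrence collapse: formulating the inequality in a way that correctly captures the independence of the two recursive subtrees (and the fact that $C_2$ is computed on $G$ rather than $G'$, which reintroduces $P(n)$ on the right-hand side), and then verifying that the substitution $z_n = 1/P(n)$ combined with $q_n = s_n$ produces an inequality as clean as $z_n \leq z_{n-1} + (1 - s_n)$. A secondary technicality is managing the $\rho$-dependent shift in the denominator when passing from sum to integral, where the slack $t \geq 3\rho + 2$ is precisely what keeps the summand comparable to $\eta/i$ and ultimately converts the $\log(n/t)$ factor into an $\eta \log(n/t)$ contribution.
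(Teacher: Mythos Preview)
Your proposal is correct and follows essentially the same route as the paper: derive the FPZ-style recurrence, exploit the design choice $q_n=s_n$ to collapse it to $1/P(n)\le 1/P(n-1)+(1-s_n)$, telescope, use the known bound $1/P(t)=2H_t-2$ below the threshold, and bound the boosted tail $\sum_{i>t}(1-s_i)$ by an integral using $t\ge 3\rho+2$ to control the denominator. The paper's write-up is nearly identical, differing only cosmetically (it states the recurrence directly with $q_n$ in both roles and performs the integral without first reducing the denominator to $\Omega(Bi)$).
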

\begin{proof}
Let $C^*$ be the minimum cut in $G$ that is used to define $\eta$ and $\rho$, and let $P(i)$ denote a lower bound on the probability that the algorithm returns $C^*$, given that all edges of $C^*$ have survived contractions up to the point where $i$ vertices are left. Once there are fewer than $t$ remaining vertices, \Cref{alg:fpz2} proceeds identically to the standard FPZ algorithm. Thus, utilizing the result from \citet{karger2021recursive}, we have that $P(t) = 1/(2H_t-2)$.

For the first phase, that is, when there are at least $t$ remaining vertices, we will be able to reuse the following recurrence for $P(n)$ from \citet{karger2021recursive}, although the value of the branching factor (compare $\pp'_n$ and $\pp_n$) is now different.
    \[P(n)= \pp_n^2P(n-1) + (1-\pp_n) (1-(1-P(n)) (1-\pp_n \cdot P(n-1))).\]
As observed in \Cref{lem:single-contraction}, $\pp_n$ is a lower bound on the probability that $C^*$ survives yet another randomized edge contraction. The recurrence is derived by noting that with probability $\pp_n$, line 12 is executed, after which the probability of returning a minimum cut is at least $\pp_n \cdot P(n-1)$. With the remaining probability $1-\pp_n$, both paths to computing the minimum cut through recursive calls on instances of size $n$ and $n-1$ must fail for the algorithm to miss the minimum cut.

This recurrence can be simplified to:
    \[\frac{1}{P(n)} = \frac{1}{P(n-1)} + 1 - \pp_n.\]
Unrolling the recurrence, we get
    \begin{align*}
        \frac{1}{P(n)} &= \frac{1}{P(t)} + \sum\limits_{i = t+1}^n (1 - \pp_i) \\
        &= 2H_t-2 + \sum\limits_{i = t+1}^n \frac{1 + (B-1)\eta}{Bi/2 - (B-1) \left(\rho+(1-\eta)\right)} \\
        &\leq 2H_t-2 + \int_{t}^{n} \frac{1 + (B-1)\eta}{Bi/2 - (B-1) \left(\rho+(1-\eta)\right)}di \\
        &= 2H_t-2 + \frac{2\left(1 + (B-1)\eta\right)}{B}\ln \left(\frac{Bn/2 - (B-1)(\rho + (1-\eta))}{Bt/2 - (B-1)(\rho + (1-\eta))}\right) \\
        &= O\left(\log t + \eta \log \frac{n}{t}\right),
    \end{align*}
where we use the fact that $\int_{L-1}^U f(x)dx \leq \sum_{i=L}^{U} f(i)$ holds for any non-decreasing function $f$, and in particular for $f(x)=-1/x$. This concludes the proof.
\end{proof}

Next, we will analyze the running time of the algorithm. Let us take a moment to revisit a textbook implementation of Karger's algorithm that runs in near-linear time using Kruskal's algorithm. Recall, Kruskal's algorithm is typically used for finding minimum spanning trees. 

In the unweighted case, we start by creating a uniformly random permutation of all edges and processing them sequentially. Throughout the algorithm, we use a union-find data structure to check which nodes have been merged. This is used in the standard fast implementation of the traditional Karger's algorithm.  When processing an edge in this order, like in Kruskal's algorithm, any edge with both endpoints in the same connected component is discarded.  If an edge's endpoints are in different components, the union-find data structure merges these nodes. The partition of vertices  formed just before merging the last two connected components, is returned as the minimum cut.
 This approach can also be extended to the weighted case, e.g., using the Gumbel trick.

A similar implementation can  be performed for the Boosted FPZ algorithm. This will in turn enable the efficient run-time of our algorithm. The implementation of each random edge contraction must be carried out in two cases, each utilizing a different data structure. The data structure used is based on the number of remaining vertices, $n$. 

If $n > t$, a union-find data structure is used and the edges are sampled lazily without replacement, with probabilities proportional to their boosted weights. This is done until an edge is found whose endpoints belong to different components. For sampling, a categorical distribution over edges can be maintained online, for example, using a red-black tree, while allowing sampling without replacement in $O(\log m)$ time.  To make recursive calls on the same graph, it is too inefficient to copy the graph and run recursive calls separately. Instead, the algorithm runs one call and then later returns to a possibly second recursive call, in a depth-first manner over the recursion tree. We utilize a union-find data structure with deletions, which also takes $O(\log m)$ time in the worst case per operation \citep{AlstrupTGRZ14}. 

If $n \leq t$, we switch to an adjacency list data structure, which allows a random edge contraction in time proportional to the number of remaining vertices, as suggested in \cite{karger2021recursive}. When switching between these two regimes, we prune the list of remaining edges in $O(m \log m)$ time to ensure that there are at most $t^2$ remaining edges. Once we are in the second phase, \Cref{alg:fpz2} is identical to the FPZ algorithm, the total run-time thereafter is $O(t^2 \log t)$.  

\begin{proof}[Proof of \Cref{thm:ks-main}]
    Let $T(k, \ell)$ be an upper bound on the expected running time of the algorithm on any call with $k$ vertices and $\ell$ edges left to process. Since we switch to the standard FPZ algorithm at $t$ vertices, for all $\ell'$, we have $T(t,\ell') = O(t^2\log t)$, which is the expected running time of the FPZ  \citep{karger2021recursive}. Note that, as mentioned above, we can assume $\ell' \leq t^2$ in this case.   
    
    Then, we have the following recurrence for $T(k, \ell)$, carefully considering the number of edges processed in each iteration, via the union-find data structure, to carry out one edge contraction. The recursive expression can be explained as follows: in any call, the algorithm processes $\ell - \ell'$ edges for some $\ell'$, taking $(\ell - \ell')\cdot O(\log n)$ time. The algorithm then makes a recursive call on $k-1$ vertices and $\ell'$ edges. Furthermore, with probability $(1-\pp_k)$, the algorithm repeats itself on the input graph. For the analysis, we take the maximum over all possible values of $\ell'$ to consider the worst case for the algorithm.
    \begin{align*}
       T(k,\ell) &\leq \max\limits_{1 \leq \ell' < \ell} \{T(k-1, \ell') + (1-\pp_k)T(k,\ell) + (\ell - \ell')\cdot O(\log n)\}. 
    \end{align*}
    This inequality can be simplified to:
    \[T(k,\ell) \leq \frac{1}{\pp_k}\max\limits_{1 \leq \ell' < \ell} \{T(k-1, \ell') + (\ell - \ell')\cdot O(\log n)\}.\]
    Unfolding the right-hand side, we get:
    \begin{align*}
        T(n,m)&\leq \max\limits_{\ell_i}\left\{ \sum\limits_{i = t+1}^{n} \frac{\ell_i \cdot O(\log n)}{\prod_{i \leq j \leq n} \pp_j} : \sum_{i = t+1}^n \ell_i \leq m\right\}+ O\left(\frac{t^2\log t}{\prod_{t+1 \leq j \leq n} \pp_j}\right) \\
        &\leq \frac{O(m\log n+ t^2\log t)}{\prod_{t+1 \leq j \leq n} \pp_j}.
    \end{align*}
    Using Lemma \ref{lem:algebraic} to lower bound the product of $\{\pp_n\}$, by setting $B=\Omega(\log n)$, for any $t\geq 3\rho+2$, we get $$
        T(n, m) = O\left(\frac{m \log n + t^2\log t}{(t/n)^{2\eta}}\right).$$
    Setting $t=\max\{\lceil3\rho+2\rceil, \sqrt{m}\}$ concludes the claim.
\end{proof}
\section{Learning Near-Optimal Predictions}
In this section, we describe how near-optimal predictions can be learned from past data. Formally, we assume that there is an unknown fixed distribution $\mathcal{D}$ on weighted graphs that share the same set $V$ of vertices, from which a number of samples are drawn independently. Given these samples, our goal is to learn near-optimal predictions $p^* \in [0,1]^{\binom{V}{2}}$ that minimize the expected runtime of the Boosted Karger's algorithm with respect to $\mathcal{D}$.

Let $C^*(G)$ denote a minimum cut in $G$. For a prediction $p$, let $\eta(G,p)$ and $\rho(G,p)$ denote the false negative and false positive with respect to $C^*(G)$ and $p$, respectively. Note that we can assume, without loss of generality, that $0 \leq {w}_G(e) \leq 1$ for all $e \in E(G)$. Otherwise, we can scale all edge weights so that they are within the interval $[0,1]$, and this would not change $C^*, \eta,$ and $\rho$. Since the edge sets of the sampled graphs may differ, we will assume the vector of predictions $p$ is defined over $\binom{V}{2}$.   
 
We have established that the expected running time of the Boosted Karger's algorithm is at most 
$R(G,p) := n^{2\eta(G,p)}\rho(G,p)^{2(1 - \eta(G,p))}$. A natural strategy for learning near-optimal predictions is to compute predictions that minimize this running time upper bound averaged over collected samples. However, $R(G,p)$ is nonconvex in $p$. Instead, we aim to optimize $U(G,p):= n^{2\eta(G,p)}\tilde{\rho}(G,p)^2$. Here $\tilde{\rho}$ is a variation of $\rho$ defined as:
\[\tilde{\rho}(G,p) := \frac{({\bf 1} - w_*(G))^\top  p}{w_G(C^*(G))},\]
where $w_{*}(G)$ is the characteristic weight vector of the minimum cut $C^*(G)$, that is, it is a vector in $[0,1]^{\binom{V}{2}}$, with its entry corresponding to $e$ equal to $w_G(e)$ if $e \in C^*(G)$, and zero otherwise. Note that $\tilde{\rho}(G,p) \geq \rho(G,p)$ for all $p$, and therefore $U(G,p)$ is a valid upper bound on $R(G,p)$. It is instructive to compare $\rho$ and $\tilde{\rho}$ for unweighted graphs. For unweighted graphs, $\rho(G,p)=\sum_{e\in E(G)\setminus C^*(G)} p_e/w_G(C^*(G))$ and $\tilde{\rho}(G,p)= \sum_{e\in \binom{V}{2}\setminus C^*(G)} p_e/w_G(C^*(G))$. Thus, the principal difference between the two is that $\tilde{\rho}(G,p)$ additionally accounts for erroneous predictions that correspond to missing edges. 

Unfortunately, $U(G,p)$ is also not convex in $p$ (see \Cref{prop:U-not-convex}). Our key observation is that upon replacing $\mathbf{1}^\top p$, which appears naturally in the definition of $\tilde{\rho}$, with a free variable, the resultant analogue of $U(G,p)$ becomes convex in $p$ (see \Cref{prop:Ub-convex}). Since $\mathbf{1}^\top p$ is an instance-independent scalar, its best value can be estimated through a grid search, in addition to running copies of online gradient descent on $p$ corresponding to all possible values of the free variable in the grid. In Appendix~\ref{sec:app}, we prove:
\begin{theorem}\label{thm:learning}
For any $\varepsilon, \delta > 0$, there exists an algorithm with the following properties. Given $\textrm{poly}(n,1/C_{\min},\log (1/\varepsilon\delta))/\varepsilon^2$ i.i.d samples from any distribution $\mathcal{D}$, satisfying that $C_{min}$ is a lower bound on the size of the minimum cut of any graph in $\mathcal{D}$'s support, the algorithm runs in $\textrm{poly}(n, 1/\varepsilon,1/C_{\min},\log (1/\delta))$ time and outputs a prediction $\bar{p}\in [0,1]^{\binom{V}{2}}$ such that, with probability at least $1-\delta$, we have
\[\mathbb{E}_{G \sim \mathcal{D}}\left[U(G,\bar{p})\right] -  \argmin_{p\in [0,1]^{\binom{V}{2}}} \mathbb{E}_{G \sim \mathcal{D}}\left[U(G,p)\right] \leq \varepsilon.\]
\end{theorem}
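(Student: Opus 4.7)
The plan is to bypass the non-convexity of $U(G,p)$ by introducing the auxiliary scalar $b = \mathbf{1}^\top p$, solving a stochastic convex program for each $b$ in a grid, and selecting the best grid point by empirical evaluation on a held-out batch. Define
\[U_b(G,p) := n^{2\eta(G,p)}\!\left(\frac{b - w_*(G)^\top p}{w_G(C^*(G))}\right)^{\!2},\]
which coincides with $U(G,p)$ on the affine slice $\{p : \mathbf{1}^\top p = b\}$ and, by \Cref{prop:Ub-convex}, is convex in $p$ for every fixed $b$ on that slice. Consequently the target decomposes as $\min_{p \in [0,1]^{\binom{V}{2}}} \mathbb{E}_G[U(G,p)] = \min_b \min_{\mathbf{1}^\top p = b} \mathbb{E}_G[U_b(G,p)]$, reducing the learning problem to a one-dimensional search over $b$ plus, for each $b$, a stochastic convex optimization in $p$.

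Concretely, I would instantiate a grid $\mathcal{B} = \{0, \varepsilon', 2\varepsilon', \ldots, \binom{|V|}{2}\}$ of size $O(n^2/\varepsilon')$ and, for each $b \in \mathcal{B}$, run projected online gradient descent on per-sample losses $p \mapsto U_b(G_t, p)$ over the convex feasible set $\{p \in [0,1]^{\binom{V}{2}} : \mathbf{1}^\top p = b\}$; the gradient is computed after extracting $C^*(G_t)$ via any polynomial-time exact minimum-cut subroutine. The feasible set has diameter $O(n^2)$, and using $n^{2\eta} \leq n^2$, $|b - w_*^\top p| \leq O(n^2)$, and $w_G(C^*(G)) \geq C_{\min}$, the per-step gradient norm is $\textrm{poly}(n, 1/C_{\min})$. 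Standard OGD regret plus a high-probability online-to-batch conversion then yields an iterate $\bar p_b$ satisfying $\mathbb{E}_G[U(G,\bar p_b)] = \mathbb{E}_G[U_b(G,\bar p_b)] \leq \min_{\mathbf{1}^\top p = b} \mathbb{E}_G[U_b(G,p)] + \varepsilon/4$ from $\textrm{poly}(n, 1/C_{\min}, \log(|\mathcal{B}|/\delta))/\varepsilon^2$ samples (which can be reused across $b$). A second fresh batch of the same asymptotic size produces Hoeffding estimates $\hat R(b)$ that are uniformly within $\varepsilon/4$ of $\mathbb{E}_G[U(G,\bar p_b)]$ across the grid via a union bound, and the algorithm returns $\bar p := \bar p_{b^\star}$ for $b^\star = \arg\min_{b \in \mathcal{B}} \hat R(b)$.

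The main obstacle is choosing $\varepsilon'$ so that the grid captures the true optimum without blowing up in size. Since $b^* := \mathbf{1}^\top p^*$ generically lies off-grid, the analysis must construct, for the nearest $b \in \mathcal{B}$, a feasible $p'$ with $\mathbf{1}^\top p' = b$ and $\mathbb{E}_G[U_b(G,p')] \leq \textrm{OPT} + O(\varepsilon)$. A natural construction perturbs a single coordinate of $p^*$ by $|b - b^*| \leq \varepsilon'$ and bounds the resulting degradation using the same Lipschitz-in-$p$ estimate underlying the OGD analysis; taking $\varepsilon' = \textrm{poly}(\varepsilon, 1/n, C_{\min})$ keeps $|\mathcal{B}|$ polynomial while suppressing the discretization error below $\varepsilon/4$. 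Summing the four contributions—grid discretization, OGD suboptimality, empirical-estimation error, and the $\delta$-probability union bound over $\mathcal{B}$—delivers the stated $(\varepsilon,\delta)$-guarantee, and the total runtime, dominated by $|\mathcal{B}|$ OGD runs each of polynomial per-step cost (including the minimum-cut oracle), is $\textrm{poly}(n, 1/\varepsilon, 1/C_{\min}, \log(1/\delta))$.
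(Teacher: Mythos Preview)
Your proposal is correct and follows essentially the same architecture as the paper's proof: discretize the auxiliary scalar $b=\mathbf{1}^\top p$, run online gradient descent on the convex surrogate $U^b$ for each grid point, average the iterates, and pick the best $b$ on a fresh held-out batch via Hoeffding plus a union bound. The only cosmetic differences are that the paper optimizes over the half-space $K_b=\{p:\mathbf{1}^\top p\le b\}$ rather than your equality slice (so $p^*$ is automatically feasible at the nearest larger grid point and the discretization error is handled by Lipschitzness in $b$ instead of perturbing a coordinate of $p^*$), and the paper then relies on $U^{b'}(G,\bar p_{b'})\ge U(G,\bar p_{b'})$ to pass the final bound back to $U$.
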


The time and sample complexity above scale with $1/C_{\min}$. Such dependencies occur regularly while learning real-valued functions without uniformly bounded derivatives (see, e.g., Chapter 4 in \cite{hazan2016introduction}). For unweighted graphs, one may always assume that $C_{min}\geq 1$.
\section{Experiments}
We aim to demonstrate that the theoretical advantages presented in the previous sections also translate to improved empirical performance.
We perform three sets of experiments \footnote{The Python implementation of the experiments is available at \href{https://github.com/helia-niaparast/global-minimum-cut-with-predictions}{https://github.com/helia-niaparast/global-minimum-cut-with-predictions}.}. 
The first involves synthetic settings where we can explicitly control the fidelity of predictions to study the performance of the algorithm quantitatively. The second is a setting where Karger's algorithm is used to repeatedly find minimum cuts on a family of instances that organically arise from trying to solve traveling salesperson (TSP) instances. We conclude with experimental comparisons on real data.

\subsection{Controlled Experiments on Synthetic Graphs}
In the following set of experiments, we explicitly control the prediction quality and measure the performance of the proposed algorithm against Karger's algorithm on a family of synthetically generated graphs.  We are especially interested in:
\begin{enumerate}[itemsep=0em,topsep=1em]
    \item How the Boosted Karger's algorithm compares to the original Karger's algorithm, especially on instances where the latter requires many repetitions to succeed.
    \item How the asymmetric error measures $\eta$ and $\rho$ affect the number of trials that the Boosted Karger's algorithm needs to find the minimum cut. 
\end{enumerate}

A bipartite graph $G$ with $n$ vertices is built as follows. Each partite set has $n/2$ vertices, and the edges consist of random perfect matchings. First, $k$ random perfect matchings are added to the graph, and then an arbitrary vertex is picked and $\ell$ of its incident edges are randomly chosen and removed from the graph. These instances are designed to be difficult for Karger-like algorithms because they contain many near-minimum-cut-sized cuts, each of which has a healthy probability of survival via random edge contractions.

To generate predictions, we first compute the true minimum cut $C^*$ on $G$. Now for any given $\eta$ and $\rho$, we pick two random subsets $C_{\eta} \subseteq C^*$ and $C_{\rho} \subseteq E \setminus C^*$, such that $w(C_{\eta}) = \eta w(C^*)$ and $w(C_{\rho}) = \rho w(C^*)$. Our predicted edge set is $\hatc_{\eta, \rho} = C^* \setminus C_{\eta} \cup C_{\rho}$. 

We build $G$ with $n = 600, k = 100, \ell = 10$. We note the number of trials that Karger's algorithm needs on $G$ to find the minimum cut. This is our baseline. Next, we fix a value of $\rho \in \{0, 10, 100\}$, and for each value of $\eta \in \{0, 0.05, 0.1, \dots, 1\}$, we measure the number of trials Boosted Karger's needs to find the minimum cut with input $(G, \hatc_{\eta, \rho})$ with $(B,t)=(n,2)$. In \Cref{fig:matching-eta}, this process is repeated $30$ times.

We can see that the Boosted Karger's algorithm outperforms Karger's algorithm by two orders of magnitude when $\eta \leq 0.5$ and $\rho \in \{0, 10\}$. Furthermore, even for $\rho = 100$, indicating especially poor prediction quality, since the predicted set of edges is about a hundred times as numerous as the size of the minimum cut, Boosted Karger's algorithm is better by one order of magnitude when $\eta \in [0, 0.6]$. 

\begin{figure*}[h]
    \begin{center}
        \includegraphics[width=0.32\textwidth]{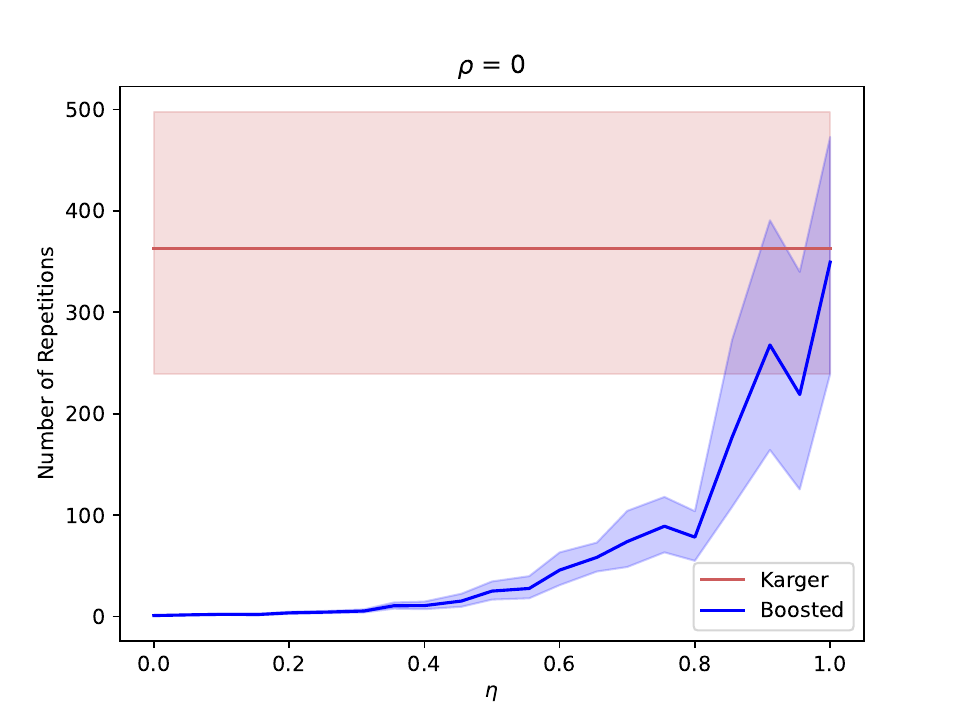}
        \includegraphics[width=0.32\textwidth]{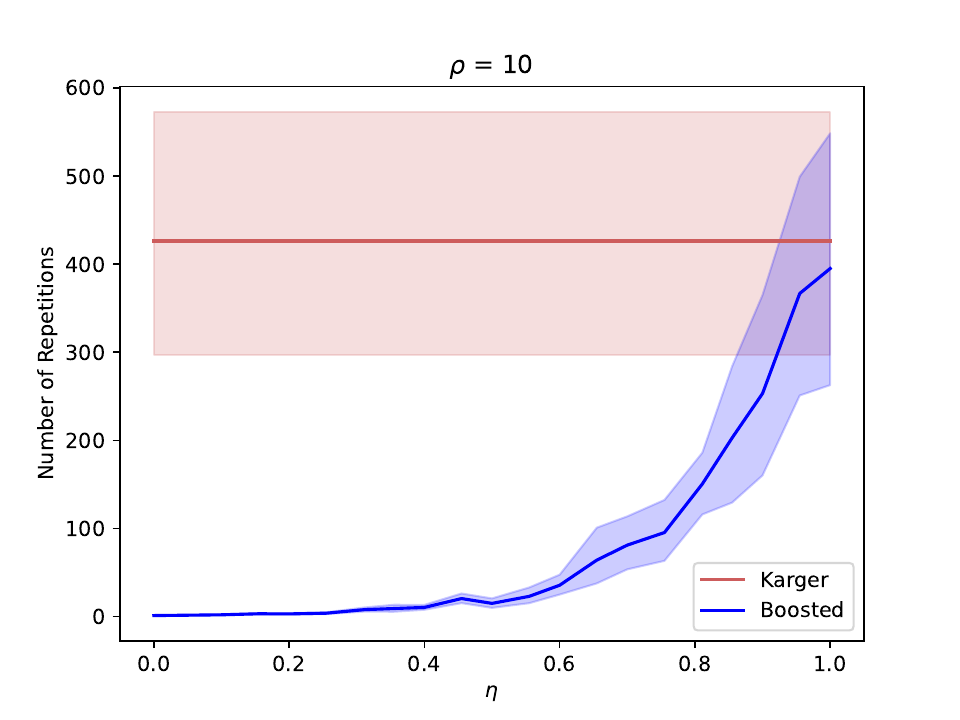}
        \includegraphics[width=0.32\textwidth]{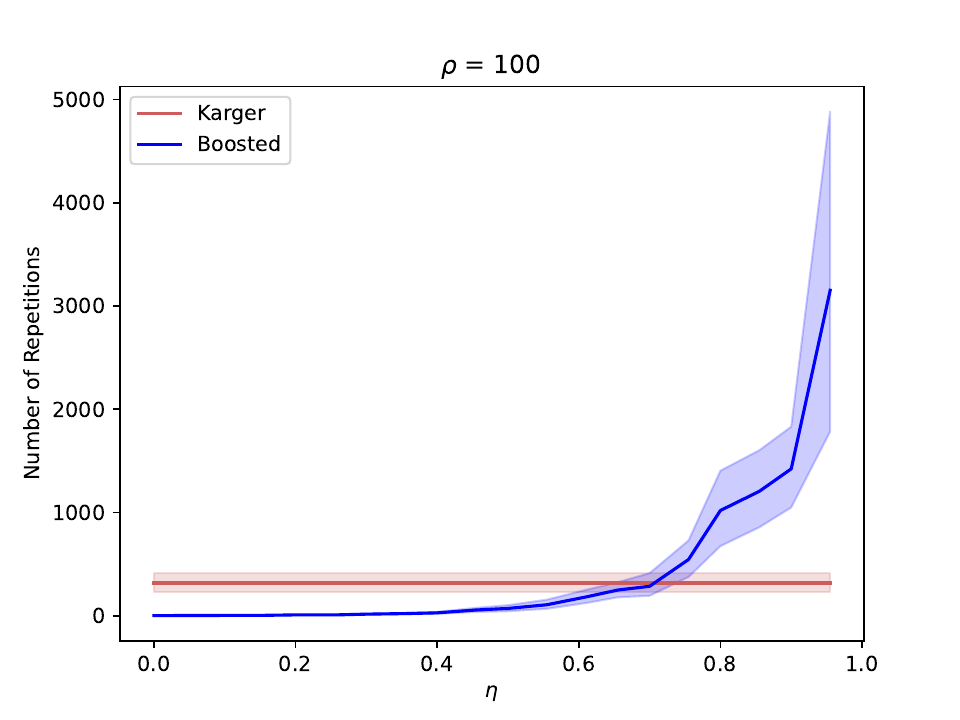}
    \caption{A controlled experimental comparison of the number of repetitions Boosted Karger's algorithm needs to find the mincut vs. the standard Karger's algorithm for different quality of predictions, as parameterized by $\eta$ and $\rho$.}
    \label{fig:matching-eta}
    \end{center}
\end{figure*}

\subsection{Minimum Cut Instances from the TSP LP}
In the second set of experiments, we explore instances in which predictions appear naturally. Cutting plane algorithms for the traveling salesperson problem (TSP) proceed by repeatedly identifying subtour elimination constraints in the subtour linear program relaxation for TSP, the search for which can be recast as finding global minimum cuts (see, e.g., \cite{chekuri1997experimental}). We use this practical use case of Karger's algorithm to evaluate the performance of the Boosted Karger's algorithm. 

The subtour elimination approach to TSP starts by minimizing $\sum_{e \in E} w_e x_e$ subject to $\sum_{e \in N(v)} x_e = 2$ for all nodes $v \in V$ and $0 \leq x_e \leq 1$ for all $e \in E$, to obtain an initial solution $x^0$. This linear program is known as the subtour relaxation.     
Then, a new graph $G_0$ is built with the same set of vertices and edges as the original, but with the difference that the weight of edge $e$ in $G_0$ is $x^0_e$. Note that if the entire vector $x^0$ is integral, then $x^0$ represents a Hamiltonian cycle, and the size of the minimum cut in $G_0$ is 2. 

The issue is that the edges maybe fractional and the goal is to find a constraint to add to the program based on $x^0$. Upon finding a minimum cut in $G_0$, if its size is smaller than 2, the following subtour elimination constraints are added to the above LP and the LP is solved again. 
\[\sum\limits_{\substack{e = \{u,v\} \\ u,v \in S_1}} x_e \leq |S_1| - 1, \text{ and} 
\sum\limits_{\substack{e = \{u,v\} \\ u,v\in S_2}} x_e \leq |S_2| - 1,\]
where $(S_1, S_2)$ is the vertex partition for the minimum cut found in $G_0$. This process is repeated until the minimum cut in the current graph is of weight 2. Thus, global minimum cuts are used to generate constraints for the subtour relaxation. 

To begin, we first construct a TSP instance. A random graph $G = (V, E)$ with $n$ vertices is built as follows. The vertices are partitioned into two subsets $S$ and $T$ of equal size, and the edge set consists of a number of random cycles. First, $k$ random Hamiltonian cycles are added to $G$ with the guarantee that each cycle crosses the partition $(S,T)$ in exactly two edges. Then, $k$  random cycles of length $n/2$ are added to each of $S$ and $T$. Finally, $\varepsilon k$ smaller random cycles, each having a random length between 3 and $n/2-1$, are added to $G$, making sure that these cycles do not cross the partition.

We obtain a sequence of graphs $G_0, G_1, \ldots, G_\ell$, for which we want to find the minimum cut. We predict that none of the edges with integer weights appear in the minimum cut. Therefore, for each graph $G_i$, the predicted minimum cut $\hatc_i$ is the set of all edges with fractional weights. These are natural and easily computable predictions. 

We build $G$ with $n = 500, k = 50, \varepsilon = 0.5$, and construct $G_0, \ldots, G_\ell$. Then, we do the following steps for each $i \in [\ell]$. On each minimum cut instance we obtain, we measure the number of iterations needed to produce the minimum cut for Karger's and for Boosted Karger on $(G_i, \hatc_i)$. We set $(B,t) =(\log n,2)$. These steps are repeated 10 times. 

In \Cref{fig:TSP}A, we evaluate both algorithms and observe that the Boosted Karger's algorithm consistently outperforms Karger's algorithm. In particular, it achieves an order-of-magnitude improvement on the harder instances where Karger's algorithm requires many repetitions to find the minimum cut.

\begin{figure*}[h]
    \centering
    \includegraphics[width=0.43\linewidth]{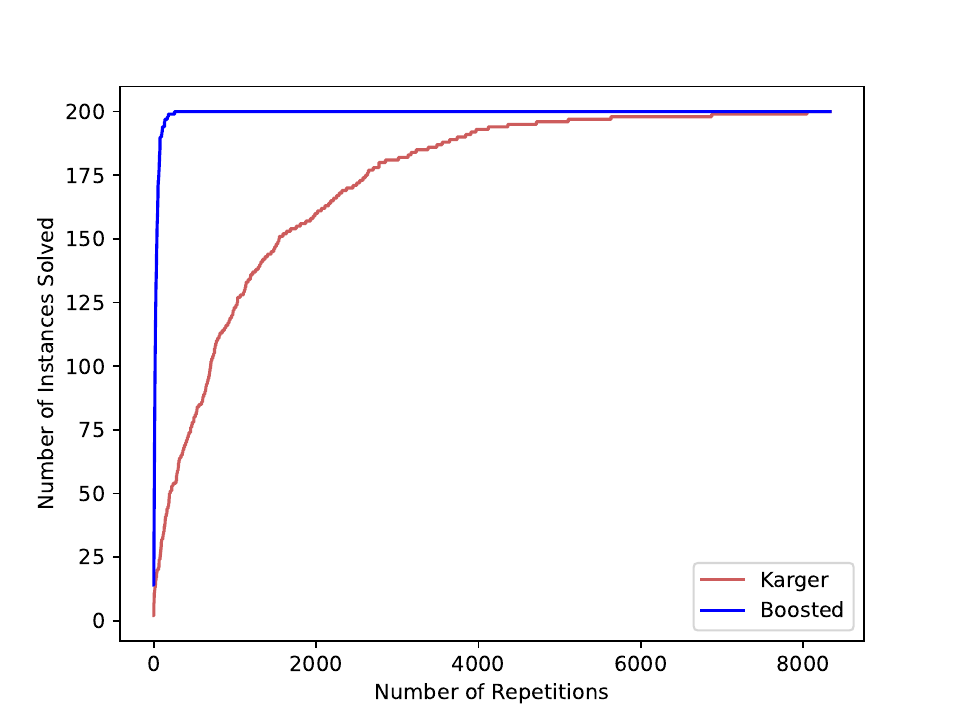}
    \includegraphics[width=0.49\linewidth]{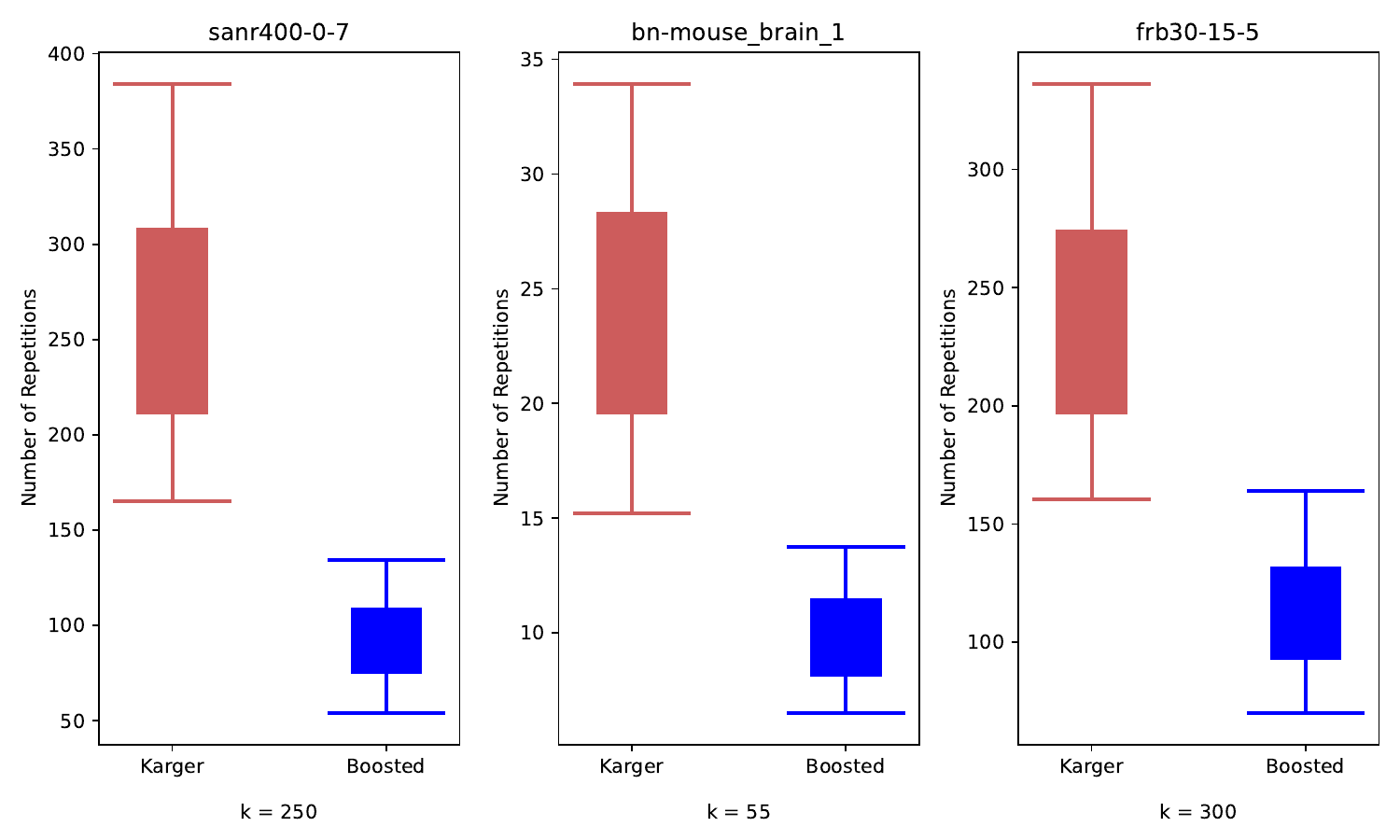}
    \caption{In \Cref{fig:TSP}A, on the left, we compare the number of minimum cuts arising from the subtour TSP that Karger's and Boosted Karger's algorithms  solved within a given number of repetitions. On the right, in \Cref{fig:TSP}B, is demonstrated the number of repetitions needed to recover the minimum cut on three real graph datasets.}
    \label{fig:TSP}
\end{figure*}

\begin{figure}[h!]
    \centering

    \label{fig:real}
\end{figure}

\subsection{Real Datasets}
Finally, we compare the performance of the Boosted Karger's algorithm and the standard variant on three real datasets from \citet{nr-aaai15}. For each dataset, the predictions are obtained by first randomly sampling half of the edges and then performing $k$ parallel runs of Karger's algorithm on the sampled edges. The predicted edge set is formed by the union of the edges of the $k$ cuts found by Karger's algorithm. As a heuristic, we pick $k$ to be close to the minimum degree of the graph. This process is repeated $100$ times in \Cref{fig:TSP}B. We observe that for all three datasets the Boosted Karger's algorithm requires discernibly fewer number of trials to find the minimum cut.
\section{Conclusion}
We explored how predictions about the minimum cut can be impactful in boosting the performance of Karger's and the Karger-Stein algorithms. Furthermore, we empirically demonstrated that the Boosted Karger's algorithm outperforms Karger's algorithm even when predictions have a relatively high error. The paper shows Karger's algorithm can naturally be improved with predictions, and a natural direction for future research is to explore how predictions may be applied to speed up other combinatorial optimization problems.

\bibliography{refs}
\bibliographystyle{plainnat}

\newpage
\appendix
\section{Learning Near-Optimal Predictions}\label{sec:app}

We begin by proving the following two propositions, which motivate our learning algorithm.

\begin{proposition}\label{prop:U-not-convex}
    The function $U(G,p)$ is not convex in $p$. 
\end{proposition}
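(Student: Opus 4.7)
The plan is to exhibit an explicit, small counterexample to the convexity inequality. The key observation is that $U(G, p) = n^{2\eta(G,p)} \tilde{\rho}(G, p)^2$ is the product of an exponential in an affine function of $p$ and the square of a linear function of $p$. Such a product is generically non-convex: the exponential factor can amplify the value of $U$ at an interior point on a line segment even when the squared factor is smaller there, creating a ``bulge'' that violates the convexity inequality.

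To exploit this cleanly, I would choose a graph on which the coordinates of $p$ that feed into $\eta$ are disjoint from those that feed into $\tilde{\rho}$, so that the two factors can be tuned independently along a single line in $p$-space. A natural choice is the path graph on three vertices with unit edge weights, whose minimum cut consists of a single edge, say $\{1,2\}$. Since $p$ is indexed by $\binom{V}{2}$, the predictions split into three coordinates: $p_{\{1,2\}}$, which alone determines $\eta(G,p) = 1 - p_{\{1,2\}}$, and the pair $p_{\{1,3\}}, p_{\{2,3\}}$, which alone determine $\tilde{\rho}(G,p) = p_{\{1,3\}} + p_{\{2,3\}}$. This decoupling is the crux of the construction.

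Next, I would evaluate $U$ at the two endpoints $p^{(1)} = \mathbf{0}$ and $p^{(2)} = \mathbf{1}$ and at their midpoint $p^{(3)} = \tfrac{1}{2}\mathbf{1}$. At $p^{(1)}$, $\tilde{\rho}$ vanishes and so $U = 0$; at $p^{(2)}$, $\eta$ vanishes so the exponential factor is $1$ while the squared factor equals $(1+1)^2 = 4$; at $p^{(3)}$, both factors are active and $U = n^{1} \cdot 1^2 = 3$, strictly exceeding the average $2$ of the endpoint values. This directly contradicts midpoint convexity, and hence convexity.

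The main obstacle, if any, is just confirming that the formal setup is satisfied: $p$ must genuinely range over $[0,1]^{\binom{V}{2}}$ rather than only over $E(G)$, so that $\tilde{\rho}$ can be made non-zero on a graph with very few non-cut edges. The paper's definition of $\tilde{\rho}$ in terms of $\mathbf{1} - w_*(G)$ explicitly permits this, so everything else reduces to the one-line arithmetic above.
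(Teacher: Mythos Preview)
Your proof is correct and takes a genuinely different route from the paper. The paper computes the Hessian of $U(G,p)$ in general, expresses it as the difference of two rank-one matrices, and then exhibits a single point (on a weighted $3$-vertex graph with edge weights $0.6$ and $0.7$) at which this Hessian fails to be positive semidefinite. You instead violate midpoint convexity directly on the unit-weight path $P_3$, evaluating $U$ at $\mathbf{0}$, $\mathbf{1}$, and $\tfrac{1}{2}\mathbf{1}$ to get $0$, $4$, and $3>2$. Your argument is shorter and calculus-free, and the ``decoupling'' observation (that $\eta$ depends only on $p_{\{1,2\}}$ while $\tilde{\rho}$ depends only on the other two coordinates) makes the nonconvexity transparent. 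The paper's Hessian computation is more laborious here but has the side benefit of directly motivating the next step: the same Hessian calculation, once $\mathbf{1}^\top p$ is frozen to a constant $b$, collapses to a single rank-one term, which is exactly how \Cref{prop:Ub-convex} is proved.
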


\begin{proof}
The hessian of $U(G,p)$ w.r.t $p$ is as follows:
\begin{align*}
  \nabla^2U(G,p) &= \frac{2n^{2\eta(G,p)}}{w_G(C^*(G))^2} 
  \Bigg( \big({\bf 1} - w_*(G) - 2\ln n \cdot \tilde{\rho}(G,p) \cdot w_*(G) \big) \\
  &\quad \times \big({\bf 1} - w_*(G) - 2\ln n \cdot \tilde{\rho}(G,p) \cdot w_*(G) \big)^T \\
  &\quad - 2 \big(\ln n \cdot \tilde{\rho}(G,p) \cdot w_*(G)\big) 
  \big(\ln n \cdot \tilde{\rho}(G,p) \cdot w_*(G)\big)^T 
  \Bigg).
\end{align*}
Being a difference of two rank-one matrices, the hessian is not positive semi-definite. Concretely, consider the following simple example demonstrating that $\nabla^2U(G,p)$ is not positive semi-definite, which means that $U(G,p)$ is not convex in $p$. Consider the following graph on 3 vertices with the edge weights written next to them. Let $p(\{1,2\}) = 1/\ln 3$, and $p({\{2,3\}}) = p({\{1,3\}}) = 0$. 
\begin{center}
    \begin{tikzpicture}[thick, scale = 1.5]
    \filldraw (0,0) circle (0.05) node[inner sep = 5pt, left]{1};
    \filldraw (1,1) circle (0.05) node[inner sep = 5pt, above]{3};
    \filldraw (1,0) circle (0.05) node[inner sep = 5pt, below]{2};

    \draw (0,0) -- (1,0) node [midway, below]{0.6};
    \draw (1,0) -- (1,1) node [midway, right]{0.7};

\end{tikzpicture}
\end{center}
Then, we have
\[\nabla^2U(G,p) = \frac{2n^{2\eta(G,p)}}{w_G(C^*(G))^2}\begin{pmatrix}
    -0.16 & -0.4 & -0.4 \\
    -0.4 & 1 & 1 \\
    -0.4 & 1 & 1 
\end{pmatrix},\]
which is not positive semi-definite.
\end{proof}

Nevertheless, we will succeed in minimizing $U(G,p)$ over $p$. To build towards this, consider the following function formed by replacing $\mathbf{1}^\top p$ in $U(G,p)$ by a free variable $b$.

\begin{definition}
   For $b \geq 0$, define 
    \[U^b(G,p) := n^{2\eta(G,p)}\left(\frac{b - \langle w_*(G), p \rangle}{w_G(C^*(G))}\right)^2, \text{and}\quad K_b := \{p \in [0,1]^{\binom{V}{2}}:{\bf 1}^Tp \leq b\}.\] 
\end{definition}

We have the following proposition.

\begin{proposition}\label{prop:Ub-convex}
    For all $b \geq 0$, $U^b(G,p)$ is convex in $p$ over $K_b$.
\end{proposition}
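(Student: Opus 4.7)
The plan is to exploit the fact that $U^b(G,p)$ depends on $p$ only through the affine scalar $u := \langle w_*(G), p\rangle$, which reduces convexity in $p$ to a univariate calculation. Since
\[\eta(G,p) = \frac{\sum_{e\in C^*(G)}(1-p_e)w_G(e)}{w_G(C^*(G))} = 1 - \frac{u}{W}, \qquad W := w_G(C^*(G)),\]
we can rewrite
\[U^b(G,p) = \frac{n^2}{W^2}\, e^{-c u}(b-u)^2, \qquad c := \frac{2\ln n}{W} > 0.\]
Since the prefactor $n^2/W^2$ is a positive constant in $p$, and $u$ is affine in $p$, the ``convex~$\circ$~affine is convex'' rule reduces the task to showing that the univariate function $h(u) := e^{-cu}(b-u)^2$ is convex on the image of $K_b$ under $p \mapsto u(p)$.

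I would next pin down that image. The WLOG assumption $0 \leq w_G(e) \leq 1$ makes $w_*(G) \in [0,1]^{\binom{V}{2}}$ componentwise, so for every $p \in K_b$ we have $0 \leq u \leq \mathbf{1}^\top p \leq b$. It therefore suffices to verify $h''(u) \geq 0$ for all $u \leq b$, i.e., for $y := b - u \geq 0$. A routine differentiation yields
\[h''(u) = e^{-cu}\bigl[\, c^2 y^2 + 4cy + 2\,\bigr].\]
The bracketed factor is a quadratic in $y$ whose two real roots $(-2 \pm \sqrt{2})/c$ are both negative since $c > 0$, so it is strictly positive for every $y \geq 0$, completing the argument.

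The principal obstacle is minor and amounts to the derivative bookkeeping that produces the right quadratic in $y$, but it is illuminating as to where the convexity comes from. The constraint $u \leq b$ is indispensable: once $u$ exceeds $b$, the factor $(b-u)^2$ starts growing while the exponential only decays, and convexity generically breaks---consistent with the non-convexity of the coupled quantity $U(G,p)$ recorded in \Cref{prop:U-not-convex}. Introducing the free scalar $b$ in place of $\mathbf{1}^\top p$ is precisely what confines $u$ to $[0,b]$ and lets $h$ inherit convexity there.
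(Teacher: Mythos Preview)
Your proof is correct and is essentially the same as the paper's: both verify positive semidefiniteness of the Hessian on $K_b$ by checking that the scalar $c^2y^2+4cy+2$ (equivalently, the paper's $4\ln^2 n\,(y/W)^2 + 8\ln n\,(y/W) + 2$) is nonnegative whenever $y=b-\langle w_*(G),p\rangle\geq 0$. Your reduction via the convex-of-affine rule makes the one-dimensional nature of the calculation explicit, while the paper writes out the full rank-one Hessian $h''(u)\,w_*(G)w_*(G)^\top$ directly, but the underlying computation is identical.
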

\begin{proof}
    The hessian of $U^b(G,p)$ w.r.t $p$ is as follows:
\begin{align*}
    \nabla^2U^b(G,p) &= \frac{n^{2\eta(G,p)}}{w_G(C^*(G))^2} \cdot 
    \left( 4\ln^2 n \left(\frac{b - \langle w_*(G), p\rangle}{w_G(C^*(G))}\right)^2  
    + 8\ln n \left(\frac{b - \langle w_*(G), p\rangle}{w_G(C^*(G))}\right) + 2 \right) \\
    &\quad \times w_*(G) w_*(G)^T.
\end{align*}

The coefficient of $w_*(G)w_*(G)^T$ in the expression above is non-negative whenever $p \in K_b$. Therefore, the hessian is positive semi-definite on the interior of $K_b$, which means $U^b(G,p)$ is convex over $K_b$.
\end{proof}

Now, we describe our proposed algorithm to compute a prediction $\bar{p}$ given a polynomial number of i.i.d samples drawn from $\mathcal{D}$, and then analyze it to prove \Cref{thm:learning}.

\paragraph{The Learning Algorithm.}
To begin, we draw $T$ i.i.d samples $G_1, \ldots, G_T$ from $\mathcal{D}$. 
We discretize the range of possible values for $b$ (which represents $\mathbf{1}^\top p$), i.e., $[0, {\binom{n}{2}}]$, into equally sized intervals, and optimize $p$ over each of them separately. 
Let $\mathcal{B}$ be the set of discrete values considered for $b$; we will specify the resolution of the grid $\mathcal{B}$ later.

For each $b \in \mathcal{B}$, we perform online gradient descent on the sequence $\{U^b(G_t,\cdot)\}_{t = 1}^T$ of convex functions over the convex body $K_b$ to obtain the set of vectors $\{p_t^b\}_{t=1}^T \subseteq K_b$, as stated below:
$$ p^b_{t+1} = \Pi_{K_b}\left[p^b_t - \eta_t \nabla U^b(G_t, p^b_t)\right], $$
where $\Pi_{K_b}[x]= \argmin_{y\in K_b} \|x-y\|_2$. Let $\bar{p}_b := \frac{1}{T}\sum_{t = 1}^T p_t^b$ for all $b \in \mathcal{B}$.

Next, we draw $T'$ new i.i.d samples $G'_1, \ldots, G'_{T'}$ from $\mathcal{D}$ and compute $\frac{1}{T'}\sum_{t = 1}^{T'} U^b(G'_t, \bar{p}_b)$ for each $b \in \mathcal{B}$. Let $b' = \argmin_{b \in \mathcal{B}}  \frac{1}{T'}\sum_{t = 1}^{T'} U^b(G'_t, \bar{p}_b)$. The algorithm outputs $\bar{p}_{b'}$. 
The values of $\eta_t$, $T$, $T'$, and the cardinality of $\mathcal{B}$ will be determined in the analysis. 

We start the analysis with the following guarantee:
\begin{theorem}[Theorem 3.1 in \citet{hazan2016introduction}]\label{thm:OGD}
  For a fixed $b \in \mathcal{B}$, let $Q$ be an upper bound on $\|\nabla U^b(G_t, p)\|_2$ for all $(t,p) \in [T] \times K_b$, and let $D$ be an upper bound on $\|p - q\|_2$ for all $p,q \in K_b$. The iterates produced by Online Gradient Descent with step sizes $\eta_t = D/Q\sqrt{t}$ guarantee that:
  \[\sum\limits_{t = 1}^T U^b(G_t, p^b_t) - \min_{p \in K_b} \sum_{t = 1}^T U^b(G_t,p) \leq \frac{3}{2}QD\sqrt{T}.\]
\end{theorem}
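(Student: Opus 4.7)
The plan is to reproduce the standard online gradient descent regret analysis, which in fact does not require more than the convexity of $U^b(G_t,\cdot)$ on $K_b$ (established in Proposition~\ref{prop:Ub-convex}), non-expansiveness of the Euclidean projection onto $K_b$, and the uniform bounds $\|\nabla U^b(G_t,p)\|_2\le Q$ and $\mathrm{diam}(K_b)\le D$ postulated in the hypothesis. Let $p^*\in\argmin_{p\in K_b}\sum_{t=1}^T U^b(G_t,p)$ be the best hindsight comparator, and write $g_t:=\nabla U^b(G_t,p_t^b)$.

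First, by convexity, $U^b(G_t,p_t^b)-U^b(G_t,p^*)\le\langle g_t,\,p_t^b-p^*\rangle$, which reduces the claim to bounding $\sum_t\langle g_t,p_t^b-p^*\rangle$ by $\tfrac{3}{2}QD\sqrt{T}$. Next, I would use the update rule $p_{t+1}^b=\Pi_{K_b}[p_t^b-\eta_t g_t]$ and the fact that Euclidean projection onto a convex set is non-expansive and $p^*\in K_b$, to get
\[\|p_{t+1}^b-p^*\|_2^2 \;\le\; \|p_t^b-\eta_t g_t-p^*\|_2^2 \;=\; \|p_t^b-p^*\|_2^2 - 2\eta_t\langle g_t,p_t^b-p^*\rangle + \eta_t^2\|g_t\|_2^2.\]
Rearranging isolates the per-step inner product:
\[\langle g_t,\,p_t^b-p^*\rangle \;\le\; \frac{\|p_t^b-p^*\|_2^2-\|p_{t+1}^b-p^*\|_2^2}{2\eta_t} + \frac{\eta_t\|g_t\|_2^2}{2}.\]

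Then I would sum over $t=1,\dots,T$ and perform Abel summation on the first term. With the convention $1/\eta_0:=0$, non-negativity of $1/\eta_t-1/\eta_{t-1}$ (the step sizes are non-increasing), and the diameter bound $\|p_t^b-p^*\|_2\le D$, the telescoping collapses to
\[\sum_{t=1}^T \frac{\|p_t^b-p^*\|_2^2-\|p_{t+1}^b-p^*\|_2^2}{2\eta_t} \;\le\; \frac{D^2}{2\eta_T}.\]
Combined with $\|g_t\|_2\le Q$, this yields the standard ``regret $\le \tfrac{D^2}{2\eta_T}+\tfrac{Q^2}{2}\sum_t\eta_t$'' bound. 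Finally, plugging in $\eta_t=D/(Q\sqrt{t})$ gives $\tfrac{D^2}{2\eta_T}=\tfrac{QD\sqrt{T}}{2}$, while $\tfrac{Q^2}{2}\sum_{t=1}^T\eta_t=\tfrac{QD}{2}\sum_{t=1}^T 1/\sqrt{t}\le QD\sqrt{T}$ using $\sum_{t=1}^T 1/\sqrt{t}\le 2\sqrt{T}$. Adding the two contributions yields the advertised $\tfrac{3}{2}QD\sqrt{T}$.

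There is no real obstacle here; the only mildly delicate point is being careful with the $t=1$ boundary term in the Abel summation and checking that $1/\eta_t-1/\eta_{t-1}\ge0$ so that the diameter bound can be applied term-by-term to reach the clean telescoped form $D^2/(2\eta_T)$. Since this is a verbatim restatement of a textbook result (Theorem~3.1 of \citet{hazan2016introduction}), in the paper itself one would simply cite it rather than reprove it.
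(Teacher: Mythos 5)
Your argument is correct and is precisely the standard proof of Theorem~3.1 in \citet{hazan2016introduction}: convexity plus non-expansiveness of projection, telescoping with the convention $1/\eta_0=0$, and the step sizes $\eta_t=D/(Q\sqrt{t})$ giving $\tfrac{1}{2}QD\sqrt{T}+QD\sqrt{T}=\tfrac{3}{2}QD\sqrt{T}$. The paper itself offers no proof and simply cites the textbook result, so your derivation coincides with the cited source's argument; nothing further is needed.
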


To use the above theorem, note that 
\[\nabla U^b(G,p) = -\frac{2n^{2\eta(G,p)}}{w_G(C^*(G))}\left[\ln n \left(\frac{b - \langle w_*(G), p\rangle}{w_G(C^*(G))}\right)^2 + \left(\frac{b - \langle w_*(G), p\rangle}{w_G(C^*(G))}\right)\right]w_*(G).\]

Let $C_{min} := \inf_{G \in \textrm{supp}(\mathcal{D})} w_{G}(C^*(G))$. Then, the guarantee in \Cref{thm:OGD} is obtained by setting $Q = {2n^7\ln n}/{C_{min}^3}$, $D = n$; these valid upper bounds on size of the gradients and the diameter can be readily verified. 

For each $b \in \mathcal{B}$, let $\text{Regret}^b_T :=\sum_{t = 1}^T U^b(G_t, p^b_t) - \min_{p \in K_b} \sum_{t = 1}^T U^b(G_t,p)$. Now, we utilize the following theorem:
\begin{theorem}[Theorem 9.5 in \citet{hazan2016introduction}] \label{thm:OCO-to-stat}
    For a fixed $b \in \mathcal{{B}}$ and any $\delta > 0$, given $T$
    i.i.d samples drawn from $\mathcal{D}$, with probability at least $1 - \delta$, we have 
    \[\mathbb{E}_{G \sim \mathcal{D}}[U^b(G,\bar{p}_b) - U^b(G,p^*_b)] \leq \frac{\text{Regret}^b_T}{T} + \sqrt{\frac{8 \log \left(\frac{2}{\delta}\right)}{T}},\]
    where $p^*_b = \argmin_{p \in K_b} \mathbb{E}_{G\sim \mathcal{D}}[U^b(G,p)]$.
\end{theorem}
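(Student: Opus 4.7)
The plan is to carry out the standard online-to-batch conversion. The argument decomposes the excess risk of $\bar p_b$ into three pieces: a convexity step (Jensen), a regret step (definition of $\text{Regret}^b_T$), and two concentration steps (one for the random iterates, one for the fixed comparator).

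First, I would invoke the convexity of $U^b(G,\cdot)$ on $K_b$, established in Proposition~\ref{prop:Ub-convex}, together with Jensen's inequality. Since $\bar p_b = \frac{1}{T}\sum_{t=1}^T p^b_t$ is an average of points in the convex set $K_b$, this gives
\[\mathbb{E}_{G\sim \mathcal{D}}\bigl[U^b(G,\bar p_b)\bigr] \;\le\; \frac{1}{T}\sum_{t=1}^T \mathbb{E}_{G\sim \mathcal{D}}\bigl[U^b(G,p^b_t)\bigr].\]

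Second, I would relate the population average on the right-hand side to the empirical sum $\frac{1}{T}\sum_t U^b(G_t,p^b_t)$ through a martingale concentration. The key observation is that the online gradient descent iterate $p^b_t$ depends only on $G_1,\ldots,G_{t-1}$, so $p^b_t$ is $\sigma(G_1,\ldots,G_{t-1})$-measurable. Consequently, defining $X_t := \mathbb{E}_G[U^b(G,p^b_t)] - U^b(G_t,p^b_t)$, the sequence $\{X_t\}$ forms a martingale difference sequence with respect to the natural filtration. Under the standard boundedness assumption of losses in $[0,1]$ behind this form of the theorem, Azuma--Hoeffding yields
\[\tfrac{1}{T}\sum_{t=1}^T X_t \;\le\; \sqrt{\tfrac{2\log(2/\delta)}{T}}\]
with probability at least $1-\delta/2$.

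Third, by the very definition of $\text{Regret}^b_T$, one has $\frac{1}{T}\sum_t U^b(G_t,p^b_t) \le \frac{1}{T}\sum_t U^b(G_t,p^*_b) + \text{Regret}^b_T/T$. Since $p^*_b$ is deterministic (it depends only on $\mathcal{D}$), the summands $U^b(G_t,p^*_b)$ are i.i.d.\ bounded, and an ordinary Hoeffding inequality gives $\frac{1}{T}\sum_t U^b(G_t,p^*_b) \le \mathbb{E}_G[U^b(G,p^*_b)] + \sqrt{2\log(2/\delta)/T}$ with probability at least $1-\delta/2$. Chaining the three estimates and taking a union bound over the two concentration events yields the claim, using $2\sqrt{2\log(2/\delta)/T} = \sqrt{8\log(2/\delta)/T}$.

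The main subtlety is the second step: because $p^b_t$ is itself a function of past samples, a direct Hoeffding inequality on $U^b(G_t,p^b_t) - \mathbb{E}_G[U^b(G,p^b_t)]$ does not apply — the summands are neither independent nor identically distributed. The martingale observation is what makes the argument go through, and once that is in place everything else is routine bookkeeping (Jensen, regret definition, and a standard Hoeffding for the fixed comparator $p^*_b$).
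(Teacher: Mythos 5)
Your proposal is correct and takes essentially the same route as the source of this statement: the paper does not prove it but cites it verbatim as Theorem 9.5 of \citet{hazan2016introduction}, whose proof is exactly the online-to-batch conversion you describe (Jensen via convexity of $U^b(G,\cdot)$ on $K_b$, the regret definition against the fixed comparator $p^*_b$, and two martingale/Hoeffding concentration steps combined by a union bound, giving $2\sqrt{2\log(2/\delta)/T}=\sqrt{8\log(2/\delta)/T}$). Your remark that this constant presupposes losses normalized to $[0,1]$ is the right caveat to flag, since in the paper's application $U^b$ is only bounded by a quantity of order $n^6/C_{\min}^2$, so a range factor is implicitly absorbed when the theorem is invoked.
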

Let $p^* = \argmin_{p \in K}\mathbb{E}_{G \sim \mathcal{D}}[U(G,p)]$, where $K = [0,1]^{\binom{V}{2}}$, and let $b^* = {\bf 1}^Tp^*$. Let $\tilde{b} \in \mathcal{B}$ be the smallest element in $\mathcal{B}$ that is larger than or equal to $b^*$, and suppose $\tilde{b} - b^* \leq \Delta$. 
Let $L(G,p)$ be the Lipschitz constant of $n^{2\eta(G,p)}\left(\frac{b - \langle w_*(G), p \rangle}{w_{G}(C^*(G))}\right)^2$ as a function of $b$.
Then, we have
\[\mathbb{E}_{G \sim \mathcal{D}}[U^{\tilde{b}}(G,p^*)] \leq \mathbb{E}_{G \sim \mathcal{D}}[U^{b^*}(G,p^*)] + L\Delta = \mathbb{E}_{G \sim \mathcal{D}}[U(G,p^*)] + L\Delta,\]
where $L$ is an upper bound on $L(G,p)$ for all $G$ in $\mathcal{D}$'s support and $p \in K$. 
Note that $\Delta$ and $L$ can be chosen such that $\Delta \leq n^2/|\mathcal{B}|$ and $L \leq 2n^4/C_{min}^2$.

Let $M$ be an upper bound on $|U^b(G,p) - U^b(G,q)|$ for all $b \in \mathcal{B}$, $p,q \in K_b$, and $G$ in $\mathcal{D}$'s support.
We can pick $M$ such that $M \leq n^6/C_{min}^2$. 
Then, by the Chernoff-Hoeffding inequality and union bound, if $T' = \Theta\left(\left(\frac{M}{\varepsilon}\right)^2\log \left(\frac{|\mathcal{B}|}{\delta}\right)\right)$, then with probability at least ${1-\delta}$, for all $b \in \mathcal{B}$, we have
\[\left|\mathbb{E}_{G \sim \mathcal{D}}\left[U^b(G,\bar{p}_b)\right] - \frac{1}{T'}\sum_{t = 1}^{T'} U^b(G'_t,\bar{p}_b)\right| \leq \varepsilon.\]

Therefore, using the fact that $b'$ is chosen to minimize the empirical average, we have
\[\mathbb{E}_{G\sim \mathcal{D}}\left[U^{b'}(G,\bar{p}_{b'})\right] \leq \frac{1}{T'}\sum_{t = 1}^{T'} U^{b'}(G'_t,\bar{p}_{b'}) + \varepsilon \leq \frac{1}{T'}\sum_{t = 1}^{T'} U^{\tilde{b}}(G'_t,\bar{p}_{\tilde{b}}) + \varepsilon \leq \mathbb{E}_{G \sim \mathcal{D}}\left[U^{\tilde{b}}(G,\bar{p}_{\tilde{b}})\right] + 2\varepsilon.\]

Finally, by Theorem \ref{thm:OCO-to-stat}, we have
\[\mathbb{E}_{G \sim \mathcal{D}}\left[U^{\tilde{b}}(G,\bar{p}_{\tilde{b}})\right] \leq \mathbb{E}_{G \sim \mathcal{D}}\left[U^{\tilde{b}}(G,{p}_{\tilde{b}}^*)\right]+ \frac{\text{Regret}^{\tilde{b}}_T}{T} + \sqrt{\frac{8 \log \left(\frac{2}{\delta}\right)}{T}}.\]

Putting everything together, we can write
\begin{align*}
    \mathbb{E}_{G \sim \mathcal{D}}\left[U^{b'}(G,\bar{p}_{b'})\right] &\leq \mathbb{E}_{G \sim \mathcal{D}}\left[U^{\tilde{b}}(G,\bar{p}_{\tilde{b}})\right] + 2\varepsilon \\
    &\leq \mathbb{E}_{G\sim \mathcal{D}}\left[U^{\tilde{b}}(G,{p}_{\tilde{b}}^*)\right]+ \frac{\text{Regret}^{\tilde{b}}_T}{T} + \sqrt{\frac{8 \log \left(\frac{2}{\delta}\right)}{T}} + 2\varepsilon\\
    &\leq\mathbb{E}_{G \sim \mathcal{D}}\left[U^{\tilde{b}}(G,p^*)\right] + \frac{\text{Regret}^{\tilde{b}}_T}{T} + \sqrt{\frac{8 \log \left(\frac{2}{\delta}\right)}{T}} +2\varepsilon\\
    &\leq \mathbb{E}_{G \sim \mathcal{D}}\left[U(G,p^*)\right] + L\Delta +\frac{\text{Regret}^{\tilde{b}}_T}{T} + \sqrt{\frac{8 \log \left(\frac{2}{\delta}\right)}{T}} + 2\varepsilon.
\end{align*}

Therefore, we need $|\mathcal{B}| = \Theta(n^6/\varepsilon C_{min}^2)$ and 
$T = \Theta\left(\max\left\{\left(\frac{QD}{\varepsilon}\right)^2, \frac{1}{\varepsilon^2}\log \frac{1}{\delta}\right\}\right)$ 
to obtain the promised guarantee.

\end{document}